\documentclass[a4paper,fleqn]{cas-sc}

\usepackage[numbers, sort&compress]{natbib}

\usepackage{amsmath}   
\usepackage{amsfonts}  
\usepackage{amssymb}   
\usepackage{amsthm}    
\usepackage{enumitem}  

\theoremstyle{plain} 
\newtheorem{proposition}{Proposition}[section] 
\theoremstyle{definition} 
\newtheorem{remark}{Remark}[section] 
\theoremstyle{definition} 

\DeclareMathOperator{\Span}{span}

\def\tsc#1{\csdef{#1}{\textsc{\lowercase{#1}}\xspace}}
\tsc{WGM}
\tsc{QE}

\begin{document}
\let\WriteBookmarks\relax
\def\floatpagepagefraction{1}
\def\textpagefraction{.001}

\shorttitle{Snapping and Boundary Enforcement in Model Reduction}

\shortauthors{Farhat et al.}  

\title [mode = title]{The Snapping Phenomenon and Boundary Enforcement in Model Reduction: Mechanisms and Mitigation}  

\author[1]{Charbel Farhat}\cormark[1]
\ead{cfarhat@stanford.edu}
\credit{Conceptualization, Methodology, Formal Analysis, Writing - original draft, Supervision, Funding acquisition}

\author[1]{Yihong Zhu}
\ead{zyh03@stanford.edu}
\credit{Software, Validation, Investigation, Visualization}

\author[1]{Faisal As'ad}
\ead{faisal3@stanford.edu}
\credit{Methodology, Validation, Investigation, Visualization}

\cortext[1]{Corresponding author}

\affiliation[1]{organization={Stanford University},
  addressline={William F. Durand Building, 496 Lomita Mall}, 
  city={Stanford},
  postcode={94305-4035}, 
  state={CA},
  country={USA}}
 
\author[2]{Tommaso Taddei}
\ead{tommaso.taddei@uniroma1.it}
\credit{Conceptualization, Formal Analysis}
\affiliation[2]{organization={Department of Mathematics, Sapienza University of Rome},
  addressline={Piazzale Aldo Moro, 5},
  city={Rome},
  postcode={00185},
  country={Italy}}

\begin{abstract}
Projection-based model order reduction is a vital paradigm for mitigating the prohibitive computational overhead of 
	high-fidelity, parametric engineering simulations. However, standard reduction workflows remain vulnerable to a severe failure
	mode, termed here the ``snapping phenomenon,'' while simultaneously struggling to enforce parameter-dependent boundary 
	conditions without resorting to expensive online constraints or penalty schemes. Snapping manifests as a pathologically rigid,
	non-physical latching of the reduced-order model onto a specific training snapshot during online inference, severely inhibiting
	smooth parametric regression or interpolation. This work introduces the first comprehensive tripartite taxonomy of the 
	phenomenon, classifying it into intrinsic, extrinsic, and numerical snapping based on its root causes. Focusing structurally 
	on isolating intrinsic (Type~1) and numerical (Type~3) snapping mechanisms, we introduce a unified mathematical framework 
	utilizing a parameter-dependent shift vector that addresses both vulnerabilities simultaneously. Unlike a posteriori 
	constraint techniques, such as Lagrange multipliers that increase online computational complexity or penalty methods that 
	require cumbersome parameter tuning, our framework embeds an explicit spatial transport mechanism directly into the state 
	ansatz. This guarantees exact, a priori satisfaction of parameter-dependent boundary conditions across unseen queries while 
	restoring continuous parametric sensitivity and eliminating snapshot latching in both standard affine and regression-augmented
	architectures. We first establish the rigorous mathematical theory of this framework within the linear regime, proving how it 
	breaks test-space orthogonality blind spots. We then extend this account to the nonlinear regime, demonstrating how numerical 
	snapping in external computational fluid dynamics is driven by the interplay of parameter-dependent boundary conditions, mesh 
	grading, and discrete equation scaling. We critique historical formulation-toggling, showing that it operates as an unreliable
	empirical heuristic that compromises near-field fidelity targets. To overcome these limitations, we propose a field-theoretic 
	realization utilizing a spatially-varying shift vector governed by a scaled Laplace framework. The proposed framework is 
	validated across structural analysis and aerodynamic benchmarks, including parametric viscous flows around a NACA airfoil and 
	a three-dimensional NASA Common Research Model configuration.
\end{abstract}

\begin{highlights}
  \item \textbf{Tripartite Taxonomy of Snapping:} Establishes a novel tripartite classification of the snapshot snapping phenomenon, distinguishing Intrinsic (Type 1), Extrinsic (Type 2), and Numerical (Type 3) failure modes in 
	  projection-based reduced-order models.
  \item \textbf{State-Shifting Framework for Boundary Conditions:} Introduces a unified state-shifting formulation that embeds parametric spatial transport, breaking test-space orthogonality blind spots and enforcing 
	  parameter-dependent boundary conditions a priori.
  \item \textbf{CFD Realization \& Industrial Scalability:} Develops a scaled Laplace extension scheme for non-descriptor CFD models, demonstrating accurate, constraint-free parameter extrapolation on industrial meshes up to 58 
	  million elements.
\end{highlights}

\begin{keywords}
Model Order Reduction \sep Boundary Condition Enforcement \sep Snapping
\end{keywords}

\maketitle

\section{Introduction}\label{sec:intro}
High-fidelity High-Dimensional Models (HDMs), such as those based on large-scale Finite Volume (FV) or Finite Element (FE) discretizations of Partial Differential Equations (PDEs), provide unprecedented accuracy in simulating complex physical phenomena. 
However, comprehensive engineering workflows require thorough \textit{parametric analyses}, wherein system responses must be evaluated across vast, multidimensional parameter spaces involving variations in geometries, boundary conditions, or material properties. The extreme computational overhead of repeated HDM evaluations renders such parametric studies prohibitive for many-query applications, 
including real-time control, optimization, and uncertainty quantification. To alleviate this burden, Projection-based Model Order Reduction (PMOR) has emerged as a crucial paradigm. By projecting high-dimensional state spaces onto parameter-independent, 
low-dimensional subspaces, PMOR constructs efficient Projection-based Reduced-Order Models (PROMs) capable of rapid, multi-query evaluation.

In the development of PMOR techniques, two primary modeling contexts dictate the reduction strategy: parametric linear problems and 
parametric nonlinear problems. Within these contexts, the underlying HDMs are typically cast into one of two mathematical formulations:
the descriptor form or the non-descriptor form~\cite{amsallem2014stability}. In the linear regime, intrinsic snapping arises purely 
from the underlying projection mechanics and parametric structural changes of the PROM, occurring primarily independent of 
discretization artifacts. Conversely, in the nonlinear regime, the choice between these two structural formulations becomes intimately
bound to spatial error distribution and numerical scaling~\cite{amsallem2014stability}.

The descriptor form retains algebraic constraints or the generalized mass matrix explicitly alongside the differential equations, 
preserving the original, unmanipulated structure of the governing physical laws. While analytically straightforward because it avoids 
the algebraic manipulation or inversion of a generalized mass matrix, the descriptor form is frequently avoided in practical HDM 
simulations. In many general applications, this avoidance stems from severe numerical and algorithmic complexities; the singular or 
poorly conditioned nature of the leading coefficient matrix forces the use of computationally expensive implicit differential-algebraic
equation solvers~\cite{wanner1996solving} and generates ill-conditioned linear systems that hinder solver convergence. In external 
Computational Fluid Dynamics (CFD) flow applications, however, the motivation for avoiding the descriptor form is driven by a 
fundamentally different numerical requirement related to spatial error distribution.

Conversely, the non-descriptor form isolates the time derivatives to express the system purely in terms of explicit state residuals. 
When the generalized mass matrix $\mathbf{M}$ is non-singular and easily invertible (such as a diagonal or well-conditioned lumped mass
matrix), the non-descriptor form is obtained directly by premultiplying the entire system by $\mathbf{M}^{-1}$, thereby isolating the 
leading time-derivative term. The non-descriptor form serves as the standard convention for many engineering simulations and becomes 
uniquely advantageous in CFD when the computational domain contains an obstacle. This geometric configuration establishes a clear 
distinction between a near-field and a far-field solution. In these frameworks, the far-field region typically features large mesh 
cells (or elements) because the solution there is nearly uniform, whereas the near-field region requires highly refined, small cells 
to resolve steep localized gradients. Because the non-descriptor form naturally scales the governing residual by the inverse of the 
cell volume, it prioritizes error minimization in the highly refined regions, thereby guaranteeing maximum near-field accuracy for a 
given convergence tolerance.

With the structural distinctions between the descriptor and non-descriptor forms established, we turn our attention to two critical, 
interconnected vulnerabilities in standard PMOR workflows that arise independently of the choice of mathematical formulation. First, 
standard reduced-order trial spaces do not, in general, satisfy unseen, parameter-dependent boundary conditions exactly. This 
challenge has been addressed through approaches such as particular-solution constructions and boundary-compatible reduced 
spaces~\cite{gunzburger2007reduced}, while alternative strategies include penalty formulations that require delicate parameter 
tuning~\cite{cosimo2016general} and \textit{a posteriori} constraint-enforcement schemes, such as Lagrange multipliers, that can 
increase online system dimensions and computational costs~\cite{kalashnikova2014penalty}. Second, residual-minimizing PMOR pipelines 
remain vulnerable to a severe failure mode that, apart from two isolated discussions and a preliminary remedy suggested or implied 
in~\cite{washabaugh2014enforcing,washabaugh2016faster}, has received little systematic investigation in the model reduction literature.

With the structural distinctions between the descriptor and non-descriptor forms established, we turn our attention to two critical, 
interconnected vulnerabilities in standard PMOR workflows that operate, \textit{a priori}, independently of the choice of mathematical
formulation. First, standard reduced-order trial spaces struggle to satisfy unseen, parameter-dependent boundary conditions without 
resorting to penalty formulations that require delicate parameter tuning~\cite{cosimo2016general} or \textit{a posteriori} constraint 
schemes (such as Lagrange multipliers) that inherently inflate online system dimensions and solve times~\cite{kalashnikova2014penalty}.
Second, residual-minimizing PMOR pipelines remain highly vulnerable to a severe failure mode, which except for two isolated 
discussions and a preliminary remedy suggested or implied in~\cite{washabaugh2014enforcing} and~\cite{washabaugh2016faster}, has been 
omitted from systematic investigation in the model reduction literature. We term this behavior the ``snapping phenomenon'' and argue 
that its rarity in academic discussion belies the severe challenge it poses to PMOR reliability. This vulnerability is particularly 
pronounced in PMOR frameworks that compute generalized coordinates by minimizing the governing residual~\cite{washabaugh2014enforcing},
such as the Least-Squares Petrov-Galerkin (LSPG) method~\cite{carlberg2011efficient, carlberg2013gnat}. We define snapping as the 
rigid, non-physical latching of a surrogate model -- and more specifically, a PROM in the context of this work -- onto a specific 
training snapshot during online inference, which inhibits the model from producing a smooth, continuous regression or interpolation 
across the parameter space. To lay the groundwork for systematically isolating and mitigating this behavior, this work introduces the 
first comprehensive tripartite taxonomy of the snapping phenomenon, categorizing it based on its fundamental root cause:

\begin{itemize} 
	\item \textit{Intrinsic snapping:} Driven directly by the inherent mathematical structure and projection mechanics of the 
		PROM itself.  
	\item \textit{Extrinsic snapping:} Driven by specific modeling choices, hyperreduction methods, or localized reduction 
		constructs introduced during the offline PMOR workflow.
	\item \textit{Numerical snapping:} Driven by discrete computational artifacts, solver tolerances, mesh grading, or spatial 
		scaling choices within the underlying HDM.
\end{itemize} 

Fundamentally, Type~1 (intrinsic) snapping manifests as a byproduct of the projection mechanics themselves, occurring 
entirely independent of spatial discretization artifacts or numerical scaling choices. While this failure mode can be most cleanly 
isolated and mathematically characterized within the linear regime, it represents a structural vulnerability that emerges whenever the
choice of the left test space inadvertently filters out parameter-induced variations of the high-dimensional operators. When a 
Petrov-Galerkin or Galerkin projection scheme maps the high-dimensional state space onto the reduced-order trial subspace, an 
ill-conditioned operator can experience a structural loss of parametric sensitivity. Consequently, during online parameter sweeps, 
the reduced-order model becomes completely blind to variations along specific geometric or material manifolds, causing the 
generalized coordinates to stagnate. This generalized coordinate stagnation represents a pathological locking state in which the 
online solver continuously returns an identical generalized coordinate vector $\mathbf{y}(\boldsymbol{\mu})$ for distinct values of 
the parameter vector $\boldsymbol{\mu}$, because the parameter-induced residual perturbations have been completely annihilated by the 
left test basis. This insensitivity ultimately forces the reconstructed solution to rigidly lock onto an individual offline training 
state, creating a discontinuous, step-like response across that neighborhood of the parameter domain rather than a smooth, physical 
regression.

While Type~1 (intrinsic) snapping stems from underlying projection mechanics -- and Type~3 (numerical) snapping, 
discussed below, arises from grid-resolution scaling -- Type~2 snapping is driven directly by the localized reduction constructs and 
hyperreduction choices introduced during the offline workflow. In highly nonlinear regimes, evaluating the full high-dimensional 
residual operator becomes computationally prohibitive. This forces practitioners to employ hyperreduction methods -- such as the 
Discrete Empirical Interpolation Method (DEIM)~\cite{chaturantabut2010nonlinear} or the Energy Conserving Sampling and Weighting (ECSW)
method~\cite{farhat2014dimensional, farhat2015structure} -- to evaluate residuals exclusively over a sparse, localized sample mesh. 
Type~2 snapping manifests when these sparse sampling patterns or localized latent-space approximations inadvertently introduce spatial
blind spots or artificial residual discontinuities as parameters vary. Because the restricted sample mesh cannot smoothly capture 
moving physical features or localized gradients across the parameter space, the online minimization framework can suffer a severe 
loss of parametric sensitivity. Consequently, the generalized coordinates pathologically collapse and lock onto the nearest training 
snapshot.

In contrast, Type~3 (numerical) snapping frequently arises as a byproduct of discretization and formulation interactions 
within the high-fidelity snapshot generator. In aerodynamic CFD applications, this failure mode typically manifests when the parameter
space includes variables that influence far-field boundary conditions explicitly. In the descriptor form, the system remains relatively
stable against snapping because the explicit parametric dependence at the far-field boundary matches the large physical volumes of 
those cells. However, when transitioning to the non-descriptor form, the governing residual is premultiplied by the inverse of the 
generalized mass matrix, which scales inversely with the cell volumes ($1/\Omega_{\mathrm{cell}}$). Because a conventionally graded 
mesh features highly refined, ultra-small cells in the near-field boundary layer, this inversion introduces a massive amplification 
factor to the near-field residuals. Consequently, the implicit parametric dependence of the near-field is artificially blown up, 
severely dominating the explicit parametric tracking of the far-field. During online evaluation, this severe residual imbalance 
inhibits smooth regression, forcing PROMs -- particularly those architected specifically around residual minimization -- to rigidly 
snap to the nearest sampled training point.

Conversely, an analogous yet inverted mechanism occurs when the parameter space governs an explicit change in the wall boundary, such 
as a geometric shape parameter. In this scenario, the roles of the descriptor and non-descriptor formulations are entirely reversed. 
Because the explicit parametric dependence is now localized in the highly refined near-field, the unweighted residuals of the 
descriptor form allow the massive, implicitly dependent far-field cell volumes to dominate and mask the geometric variations. Here, 
the descriptor form triggers snapping, whereas the non-descriptor form acts as a stabilizing agent; by scaling the residual by 
$1/\Omega_{\mathrm{cell}}$, it naturally amplifies the near-field variations, properly aligning the solver's numerical weight with the
explicit geometric parameterization. Notably, across both configurations, this pathologically rigid behavior remains a pure numerical
artifact that does not reflect the structural configuration of the PROM, nor is it a byproduct of the reduction constructs introduced 
during PMOR.

The scope of the present work is strictly limited to characterizing and mitigating Type~1 (intrinsic) and Type~3 
(numerical) snapping; a formal investigation into Type~2 (extrinsic) snapping is deferred to a future study. 
Methodologically, this work focuses on parametric, steady external flow CFD problems where the parameter space dictates 
parameter-dependent far-field boundary conditions, wall boundaries, or a combination thereof~\cite{de2026nonlinear}. However, the 
fundamental insights and mitigation strategies developed herein are applicable, in principle, to any steady or unsteady computational 
framework developed for the numerical solution of PDEs exhibiting analogous discretization and parameterization characteristics.

In seeking to address these specific failure modes, a critique of the existing literature reveals that, based on the results presented
in~\cite{washabaugh2014enforcing, washabaugh2016faster}, the historical approach developed within our research group has been to toggle
between the descriptor and non-descriptor formulations depending on which boundary conditions are affected by the parameter space. 
However, we identify a fundamental and irreconcilable limitation to this conventional paradigm. While toggling remains viable as an 
empirical heuristic for isolated parameter sets, it completely breaks down when the parameter space simultaneously governs both 
far-field conditions and near-field wall characteristics (such as surface geometry modifications or wall boundary conditions); because
one regime strictly demands the descriptor form while the other requires the non-descriptor form, the dual-parameter problem leaves the
practitioner structurally deadlocked. Furthermore, even in isolated settings, this toggling strategy introduces severe compromises: 
forcing a descriptor formulation inherently degrades essential near-field fidelity targets, and our newly proposed mechanistic account
demonstrates that it can still fail to prevent snapping under routine operational constraints. Here, we define such routine constraints
as standard production environments where the high-dimensional mesh grading is engineered solely to satisfy localized simulation 
accuracy targets without \textit{a priori} anticipation of snapping artifacts, spanning parameter dimensions and ranges that vary 
arbitrarily in scale.

To overcome both the operational limits of formulation-toggling and the challenge of parametric boundary enforcement, we introduce a 
unified mathematical framework utilizing a parameter-dependent shift vector. Unlike \textit{a posteriori} constraint methodologies 
that enforce boundary conditions at the expense of online optimization complexity, our framework embeds dynamic parameter transport 
directly into the state representation prior to residual minimization. This dual-action construction restores continuous parametric 
sensitivity while enforcing inhomogeneous boundary conditions \textit{a priori} across both operational regimes:
\begin{itemize} 
	\item \textit{The linear regime (intrinsic snapping):} We present a rigorous mathematical theory explaining how intrinsic 
		structural features drive the snapping phenomenon. We then introduce an algebraic realization of the 
		state-reconstruction shift vector to restore the model's sensitivity to parameter changes, overcoming the 
		pathologically rigid generalized coordinate stagnation that characterizes standard projections.  
	\item \textit{The nonlinear regime (numerical snapping):} We provide a mechanistic account of how parameter-dependent boundary
		conditions, spatial mesh grading, and discrete equation scaling combine to drive numerical snapping in FV external 
		flow CFD formulations utilizing dual cells. We then generalize our state-shifting framework into a field-theoretic 
		mitigation strategy governed by a scaled Laplace framework to smoothly diffuse boundary transformations into the 
		interior volume mesh, satisfying parameter-dependent boundary conditions \textit{a priori} and neutralizing Type~3 
		numerical snapping under the non-descriptor formulation.
\end{itemize}

The remainder of this paper is organized as follows. Section~\ref{sec:framework} introduces the general mathematical formulation of 
our unified state-shifting framework, outlining how a parameter-dependent shift vector enforces boundary updates \textit{a priori} and
restores continuous parametric sensitivity to the state representation. Section~\ref{sec:linear} then applies this framework to 
parametric linear problems, detailing the theoretical foundations of intrinsic snapping and establishing our formal mitigation 
approach. Section~\ref{sec:nonlinear} shifts focus to parametric nonlinear problems, establishing the physical mechanism of numerical 
snapping driven by mesh-resolution scaling in CFD, evaluating the performance and accuracy limitations of historical descriptor 
configurations, and introducing our specific realization utilizing a spatially-varying shift vector governed by a Laplace framework. 
Section~\ref{sec:applications} validates these theoretical and mechanistic frameworks across a series of rigorous numerical benchmarks,
including a linear static structural analysis, a parametric two-dimensional viscous airfoil, and a hyperreduced three-dimensional NASA
Common Research Model (CRM) configuration. Concluding the paper, Section~\ref{sec:conclusions} provides brief remarks and outlines 
directions for future research. 

\section{The unified state-shifting framework}\label{sec:framework}
To overcome the localized loss of parametric sensitivity that characterizes standard PMOR pipelines and eliminate the need for 
computationally heavy \textit{a posteriori} boundary constraint schemes, we here introduce a generalized mathematical abstraction 
termed the \textit{unified state-shifting framework}. Standard reduction methodologies depend on a traditional affine representation 
of the high-dimensional state space, where parameter-dependent variations are captured exclusively through a low-dimensional coordinate
vector moving within a static subspace. When the projection mechanics filter out operator variations, or when discrete grid resolution
disparities introduce massive numerical imbalances, this static representation collapses into the pathologically rigid snapshot 
latching detailed in this work.

Rather than treating these distinct structural and numerical failure modes with decoupled, localized algorithmic patches, our framework
alters the foundational trial space representation itself. We propose an augmented affine ansatz that embeds an explicit, 
parameter-dependent spatial tracking mechanism directly into the high-dimensional state representation prior to any projection or 
residual minimization step.

Let $\mathbf{x}(\boldsymbol{\mu}) \in \mathbb{R}^N$ represent the high-dimensional physical state vector governed by a general 
parametric system, where the state dimension $N$ is assumed to be prohibitively large, and where $\boldsymbol{\mu} \in \mathcal{D} 
\subset \mathbb{R}^{N_{\mu}}$ denotes the vector of independent parameters operating within an $N_{\mu}$-dimensional parameter domain.
We redefine the trial space approximation by introducing a parameter-dependent, spatially-varying shift vector $\bar{\mathbf{x}}
(\boldsymbol{\mu}) \in \mathbb{R}^N$:
\begin{equation}
	\mathbf{x}(\boldsymbol{\mu}) \approx \tilde{\mathbf{x}}(\boldsymbol{\mu}) = \mathbf{V} \mathbf{y}(\boldsymbol{\mu}) + 
	\mathbf{x}_{\mathrm{ref}} + \bar{\mathbf{x}}(\boldsymbol{\mu}) 
	\label{eq:affine_approx_shifted}
\end{equation}
where $\mathbf{V} \in \mathbb{R}^{N \times n}$ is the parameter-independent right reduced-order basis (ROB), $n \ll N$, 
$\mathbf{x}_{\mathrm{ref}} \in \mathbb{R}^N$ is the static reference state or snapshot mean, and $\mathbf{y}(\boldsymbol{\mu}) \in 
\mathbb{R}^n$ is the vector of generalized coordinates computed online. 

\begin{remark}[Separation of static reference state and dynamic shift vector]
	While $\mathbf{x}_{\mathrm{ref}}$ and $\bar{\mathbf{x}}(\boldsymbol{\mu})$ could abstractly be combined into a single 
	consolidated vector $\bar{\mathbf{x}}_{\mathrm{total}}(\boldsymbol{\mu}) = \mathbf{x}_{\mathrm{ref}} + 
	\bar{\mathbf{x}}(\boldsymbol{\mu})$, maintaining their explicit separation serves critical conceptual and computational 
	purposes: 
	\begin{enumerate} 
		\item \textit{Isolation of dynamic parametric transport:} Keeping $\bar{\mathbf{x}}(\boldsymbol{\mu})$ distinct 
			cleanly isolates dynamic parametric transport from the baseline physical state. Importantly, maintaining this 
			parameter-dependent shift explicitly in the general formulation provides the required baseline to satisfy 
			snapshot reproduction at training points (up to subspace truncation error) and to rigorously analyze snapshot 
			reconstruction errors and generalized coordinate stagnation when snapping occurs.
		\item \textit{Offline/online computational efficiency:} In online implementations, $\mathbf{x}_{\mathrm{ref}}$ 
			generates static, parameter-independent projected vectors that are pre-assembled once offline. Separating 
			$\mathbf{x}_{\mathrm{ref}}$ explicitly exposes this decoupling, ensuring no redundant high-dimensional 
			operations occur during online query evaluation.  
		\item \textit{Conceptual separation of physical states and parametric shifts:} In spatial PDE settings, 
			$\mathbf{x}_{\mathrm{ref}}$ represents a reference physical state vector (e.g., a fluid or structural state), 
			whereas $\bar{\mathbf{x}}(\boldsymbol{\mu})$ represents a parameter-dependent spatial shift. Retaining them 
			as distinct entities prevents conceptual conflation between physical state variables and parameter-driven 
			spatial transformations.
	\end{enumerate} 
\end{remark}

The mathematical purpose of the shift vector $\bar{\mathbf{x}}(\boldsymbol{\mu})$ is to act as a continuous algebraic transport 
mechanism across the parameter domain $\mathcal{D}$. Structurally, the shift vector is constructed to satisfy two foundational 
requirements:
\begin{enumerate} 
	\item \textit{Snapshot reproduction at training points:} At any offline training snapshot location 
		$\boldsymbol{\mu}_{i} \in \{\boldsymbol{\mu}_{j}\}_{j=1}^{n_{s}}$, the shift vector $\bar{\mathbf{x}}
		(\boldsymbol{\mu})$ is constrained to ensure that the reconstruction ansatz~\eqref{eq:affine_approx_shifted} 
		preserves the integrity of the sampled states (and guarantees exact data recovery under an untruncated ROB).
		The algebraic manifestation of this condition is highly dependent on the operational regime: 
		\begin{itemize} 
			\item In the linear operator regime, this requirement is satisfied by forcing the shift to vanish, yielding 
				$\bar{\mathbf{x}}(\boldsymbol{\mu}_{i}) = \mathbf{0}$, thereby leaving the baseline snapshot 
				representation unperturbed.  
			\item In the nonlinear regime, the shift vector at a training point recovers localized state discrepancies or 
				boundary shifts, mapping directly to the \textit{latent or closure error} evaluated at the snapshot 
				coordinates.  
		\end{itemize} 
		In both contexts, this constraint guarantees that the underlying reduced subspace accurately anchors the continuous 
		parameter tracking mechanism without introducing unphysical artifacts or modeling bias.
	\item \textit{Restoration of explicit parameter tracking and exact boundary enforcement:} For an unseen online query 
		parameter $\boldsymbol{\mu}^\star \notin \{\boldsymbol{\mu}_j\}_{j=1}^{n_s}$, the shift vector 
		$\bar{\mathbf{x}}(\boldsymbol{\mu})$ dynamically deforms the high-dimensional state representation, mapping changes in
		boundary conditions, operating parameters, or localized geometric features directly onto the state vector. Crucially, 
		$\bar{\mathbf{x}}(\boldsymbol{\mu})$ is explicitly engineered to satisfy parameter-dependent boundary constraints 
		(such as far-field free-stream updates or near-field wall characteristics) \textit{a priori}. By absorbing these 
		boundary variations directly into the shifted trial space ansatz, the ROB $\mathbf{V}$ is strictly 
		constrained to capture homogeneous fluctuations around a physically valid state, relieving the generalized coordinates
		$\mathbf{y}(\boldsymbol{\mu})$ of the nonlinear burden of tracking boundary updates and guaranteeing exact boundary 
		satisfaction without online constraint penalties.
\end{enumerate}

By substituting the augmented ansatz~\eqref{eq:affine_approx_shifted} into a generic high-dimensional residual operator 
$\mathbf{r}\left(\mathbf{x}(\boldsymbol{\mu}); \boldsymbol{\mu}\right) = \mathbf{0}$, the modified residual system evaluated online reads:
\begin{equation*} 
	\mathbf{r}_{\mathrm{sh}}\left(\mathbf{y}(\boldsymbol{\mu}); \boldsymbol{\mu}\right) = \mathbf{r}\left(\mathbf{V} 
	\mathbf{y}(\boldsymbol{\mu}) + \mathbf{x}_{\mathrm{ref}} + \bar{\mathbf{x}}(\boldsymbol{\mu}); \boldsymbol{\mu}\right) 
\end{equation*}
where the subscript $\mathrm{sh}$ denotes the shifted state representation, defining a composite algebraic operator mapped directly 
onto the low-dimensional generalized coordinates. When a Petrov-Galerkin projection closes this system by enforcing orthogonality 
against a left test space represented by a left ROB $\mathbf{W} \in \mathbb{R}^{N\times n}$, the resulting PROM takes the form:
\begin{equation*} 
	\mathbf{W}^T \mathbf{r}\left(\mathbf{V} \mathbf{y}(\boldsymbol{\mu}) + \mathbf{x}_{\mathrm{ref}} + 
	\bar{\mathbf{x}}(\boldsymbol{\mu}); \boldsymbol{\mu}\right) = \mathbf{0}
\end{equation*}

The fundamental utility of this state-shifting framework lies in how the vector field~$\bar{\mathbf{x}}(\boldsymbol{\mu})$ alters the 
kernel of the governing system while simultaneously guaranteeing exact boundary transport. To be precise, we define the composite 
projected residual operator~$\mathcal{R}_{\boldsymbol{\mu}}: \mathbb{R}^n \rightarrow \mathbb{R}^n$ as:
\begin{equation*} 
	\mathcal{R}_{\boldsymbol{\mu}}(\mathbf{y}) = \mathbf{W}^T\mathbf{r}(\mathbf{V}\mathbf{y} + \mathbf{x}_\mathrm{ref} + 
	\bar{\mathbf{x}}(\boldsymbol{\mu});\boldsymbol{\mu})
\end{equation*}
In standard PROMs where~$\bar{\mathbf{x}}(\boldsymbol{\mu})=\mathbf{0}$, the online solver must find a generalized coordinate 
vector~$\mathbf{y}(\boldsymbol{\mu})$ such that~$\mathcal{R}_{\boldsymbol{\mu}}(\mathbf{y}) = \mathbf{0}$. However, 
parameter-dependent variations of the high-fidelity physics are frequently masked or filtered out because they fall into the massive 
unmodeled subspace (or kernel) associated with the reduced-order evaluation mapping. By introducing the parameter-dependent 
shift~$\bar{\mathbf{x}}(\boldsymbol{\mu})$ directly into the input argument of the residual, we explicitly disrupt these structural 
or numerical orthogonality conditions. This deforms the kernel of~$\mathcal{R}_{\boldsymbol{\mu}}$, eliminating steep boundary-cell 
residual spikes, preventing the pathologically rigid generalized coordinate stagnation that characterizes standard projections, and 
ensuring that the projected system retains a sharp, continuous sensitivity to parameter variations~$\boldsymbol{\mu}$ across the 
entire domain~$\mathcal{D}$.
 
While the algebraic formulation of the shift remains invariant across all applications, its practical realization depends entirely on 
the operational regime under consideration. In the following sections, we demonstrate the versatility of this framework across two 
highly distinct computational settings. In Section~\ref{sec:linear}, we analyze the framework within the linear regime, showing how 
the shift targets and breaks exact algebraic test-space null space conditions. In Section~\ref{sec:nonlinear}, we transition to the 
nonlinear regime, detailing how this general vector field $\bar{\mathbf{x}}(\boldsymbol{\mu})$ is physically realized on complex CFD 
grids by solving a field-theoretic, mesh-resolution-scaled Laplace framework to enforce boundary conditions \textit{a priori} and 
eliminate numerical snapping.

\section{Intrinsic snapping in PMOR for parametric linear problems}
\label{sec:linear}

In this section, we focus specifically on Type~1 (intrinsic) snapping, isolating its theoretical origins and mitigation within the 
linear regime. To detail the precise mathematical mechanisms governing our unified state-shifting framework, we instantiate the 
general parametric operator $\mathbf{r}(\mathbf{x}(\boldsymbol{\mu}); \boldsymbol{\mu}) = \mathbf{0}$ introduced in 
Section~\ref{sec:framework} for linear systems. Specifically, we consider the large-scale, high-dimensional parametric linear system:
\begin{equation}
\mathbf{A}(\boldsymbol{\mu})\mathbf{x}(\boldsymbol{\mu}) = \mathbf{b}(\boldsymbol{\mu}) \label{eq:linear_hdm}
\end{equation}
where the square, non-singular matrix $\mathbf{A}(\boldsymbol{\mu}) \in \mathbb{R}^{N \times N}$ and the right-hand side forcing 
vector $\mathbf{b}(\boldsymbol{\mu}) \in \mathbb{R}^N$ are parameter-dependent operators governing the high-dimensional physical 
state vector $\mathbf{x}(\boldsymbol{\mu}) \in \mathbb{R}^N$.

By adopting the right ROB $\mathbf{V} \in \mathbb{R}^{N \times n}$ and left test basis $\mathbf{W} \in \mathbb{R}^{N \times n}$ 
established in Section~\ref{sec:framework}, the unaugmented baseline approximation ($\bar{\mathbf{x}}(\boldsymbol{\mu}) = \mathbf{0}$)
collapses the general Petrov-Galerkin PROM equation to the linear algebraic system:
\begin{equation}
	\left(\mathbf{W}^T \mathbf{A}(\boldsymbol{\mu}) \mathbf{V}\right)
	\mathbf{y}(\boldsymbol{\mu}) = \mathbf{W}^T \left( \mathbf{b}(\boldsymbol{\mu}) - \mathbf{A}(\boldsymbol{\mu})
	\mathbf{x}_{\mathrm{ref}} \right)
	\label{eq:petrov_galerkin}
\end{equation}
Here, we assume for simplicity that the Proper Orthogonal Decomposition (POD) method of snapshots utilized to construct $\mathbf{V}$ 
is performed without truncation, fixing the basis dimension $n$ to match the total number of collected training snapshot states $n_s$.
This configuration is well-justified by the static nature of the underlying problem; unlike transient settings that naturally yield 
dense, time-resolved data arrays requiring aggressive compression, static parametric analyses are typically bounded by a limited 
number of snapshot generation opportunities, making the retention of full snapshot energy both computationally tractable and 
preferred to maximize information retention.

In the subsequent subsections, we leverage this linear instantiation to mathematically expose the structural pathologies underlying 
conventional projection operators and detail how our state-shifting strategy restores parametric smoothness.

\subsection{Theoretical origin of intrinsic snapping}
\label{sec:theoretical}

To mathematically characterize the structural pathology under investigation, we first introduce a formal definition of the snapping 
phenomenon within the parametric linear framework. Let $\operatorname{Range}(\mathbf{V}) = \Span \{\mathbf{v}_1, \mathbf{v}_2, \dots, 
\mathbf{v}_{n_s}\}$ denote the subspace spanned by the columns $\mathbf{v}_j$ of the right ROB $\mathbf{V}$. We state that the 
solution of the Petrov-Galerkin PROM~\eqref{eq:petrov_galerkin} \textit{snaps} at a query parameter $\boldsymbol{\mu}^\star \in 
\mathcal{D}$ (where $\boldsymbol{\mu}^{\star} \notin \{ \boldsymbol{\mu}_j \}_{j=1}^{n_s}$) if there exists a training parameter 
$\boldsymbol{\mu}_i \in \{\boldsymbol{\mu}_j\}_{j=1}^{n_s}$ corresponding to a snapshot state in $\operatorname{Range}(\mathbf{V})$ 
(with $\boldsymbol{\mu}_i \neq \boldsymbol{\mu}^\star$) such that the reconstructed approximate solution identically matches the 
training snapshot:
\begin{equation*}
	\tilde{\mathbf{x}}(\boldsymbol{\mu}^{\star}) = \mathbf{x}\left(\boldsymbol{\mu}_i\right)
\end{equation*}

From this definition, it follows that while the simultaneous satisfaction of a non-singular system matrix and a zero projection error 
guarantees exact physical recovery, the snapping pathology can manifest across three distinct operational regimes where these 
conditions are violated:
\begin{enumerate}[label=(\arabic*)]
	\item \textit{Non-singular matrix with non-zero projection error:} The reduced operator remains invertible, but misaligned 
		projection mechanics pathologically draw and lock the unique online coordinate vector onto an incorrect offline 
		training snapshot state.
	\item \textit{Singular matrix with zero projection error:} The exact solution resides within the subspace, but the operator 
		loses uniqueness, creating a non-trivial null space that allows the solution to collapse onto a training snapshot 
		coordinate vector.
\item \textit{Singular matrix with non-zero projection error:} The system simultaneously lacks both a representation of the true state
	and a unique inversion pathway, compounding the structural vulnerability and triggering a breakdown of parametric smoothness.
\end{enumerate}

With these conditions established, we state and prove the following proposition. While the exact recovery of a training snapshot is 
often taken for granted as a self-evident or universally desirable property of an interpolative surrogate, we demonstrate here how 
this rigid identity underscores the mathematical onset of the snapping pathology when an online state solution encounters an 
unmitigated or ill-conditioned reduced operator subspace.

\begin{proposition}\label{pro:p1}
	Consider the parametric linear HDM~\eqref{eq:linear_hdm} reduced via the unaugmented $(\bar{\mathbf{x}}(\boldsymbol{\mu}) = 
	\mathbf{0})$ Petrov-Galerkin projection scheme~\eqref{eq:petrov_galerkin}. If at an online parameter 
	query~$\boldsymbol{\mu}^\star \in \mathcal{D}$, the exact high-dimensional solution can be represented perfectly by
	the affine trial space, such that~$(\mathbf{x}(\boldsymbol{\mu}^\star) - \mathbf{x}_{\mathrm{ref}}) \in \operatorname{Range}
	(\mathbf{V})$, and the corresponding reduced-order system matrix~$\mathbf{W}^T \mathbf{A}(\boldsymbol{\mu}^\star) \mathbf{V}$ 
	is non-singular, then the reconstructed solution from the PROM matches the exact solution perfectly, 
	yielding~$\tilde{\mathbf{x}}(\boldsymbol{\mu}^\star) = \mathbf{x}(\boldsymbol{\mu}^\star)$.
\end{proposition}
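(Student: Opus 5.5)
The plan is a direct consistency-plus-uniqueness argument. First, I will use the representability hypothesis: since $\mathbf{x}(\boldsymbol{\mu}^\star) - \mathbf{x}_{\mathrm{ref}} \in \operatorname{Range}(\mathbf{V})$, there exists a coordinate vector $\mathbf{y}^\star \in \mathbb{R}^n$ with $\mathbf{x}(\boldsymbol{\mu}^\star) = \mathbf{V}\mathbf{y}^\star + \mathbf{x}_{\mathrm{ref}}$. I will then substitute this representation into the HDM~\eqref{eq:linear_hdm} at $\boldsymbol{\mu}^\star$. This gives $\mathbf{A}(\boldsymbol{\mu}^\star)\mathbf{V}\mathbf{y}^\star = \mathbf{b}(\boldsymbol{\mu}^\star) - \mathbf{A}(\boldsymbol{\mu}^\star)\mathbf{x}_{\mathrm{ref}}$. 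Left-multiplying by $\mathbf{W}^T$ shows that $\mathbf{y}^\star$ satisfies the Petrov-Galerkin system~\eqref{eq:petrov_galerkin} at $\boldsymbol{\mu}^\star$; in other words, the exact coordinates are consistent with the PROM.

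Next, I will use the non-singularity of $\mathbf{W}^T\mathbf{A}(\boldsymbol{\mu}^\star)\mathbf{V}$. Because this matrix is invertible, \eqref{eq:petrov_galerkin} has exactly one solution, so the PROM coordinates satisfy $\mathbf{y}(\boldsymbol{\mu}^\star) = \mathbf{y}^\star$. With $\bar{\mathbf{x}}(\boldsymbol{\mu}^\star) = \mathbf{0}$, the reconstruction in \eqref{eq:affine_approx_shifted} then gives $\tilde{\mathbf{x}}(\boldsymbol{\mu}^\star) = \mathbf{V}\mathbf{y}^\star + \mathbf{x}_{\mathrm{ref}} = \mathbf{x}(\boldsymbol{\mu}^\star)$, which is the claim.

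There is no substantial obstacle. The one point that needs care is that $\mathbf{y}^\star$ need not be unique as a representation if $\mathbf{V}$ were rank-deficient. Even in that case the argument still closes: any $\mathbf{y}^\star$ that represents the exact state solves the reduced system, invertibility forces it to equal the PROM solution, and the reconstruction is the same. Moreover, non-singularity of $\mathbf{W}^T\mathbf{A}\mathbf{V}$ already implies that $\mathbf{V}$ has full column rank. I will note this explicitly so that the role of each hypothesis is clear: the first hypothesis gives existence of consistent coordinates, and the second gives their uniqueness.
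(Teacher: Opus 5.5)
Your proposal is correct and follows essentially the same route as the paper: representability gives exact coordinates that satisfy the projected system, and invertibility of $\mathbf{W}^T\mathbf{A}(\boldsymbol{\mu}^\star)\mathbf{V}$ forces the PROM coordinates to equal them. The paper phrases uniqueness as a trivial null space for the difference $\mathbf{y}(\boldsymbol{\mu}^\star)-\mathbf{y}_i$, and your remark that invertibility already gives $\mathbf{V}$ full column rank justifies the paper's unproved claim that the representing coordinate vector is unique.
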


\begin{proof}
	Because $\mathbf{x}(\boldsymbol{\mu}^\star) - \mathbf{x}_{\mathrm{ref}} \in \operatorname{Range}(\mathbf{V})$, the exact 
	state resides perfectly within the affine trial space, admitting the exact representation $\mathbf{x}(\boldsymbol{\mu}^\star) 
	= \mathbf{V}\mathbf{y}_i + \mathbf{x}_{\mathrm{ref}}$ for some unique coordinate vector $\mathbf{y}_i \in \mathbb{R}^n$. 
	Premultiplying the full parametric system $\mathbf{A}(\boldsymbol{\mu}^\star)\left(\mathbf{V}\mathbf{y}_i + 
	\mathbf{x}_{\mathrm{ref}}\right) = \mathbf{b}(\boldsymbol{\mu}^\star)$ by the left test basis $\mathbf{W}^T$ yields the 
	projected exact relation:
	\begin{equation*}
		\left(\mathbf{W}^T \mathbf{A}(\boldsymbol{\mu}^\star) \mathbf{V}\right) \mathbf{y}_i = \mathbf{W}^T \left(\mathbf{b}
		(\boldsymbol{\mu}^\star) - \mathbf{A}(\boldsymbol{\mu}^\star) \mathbf{x}_{\mathrm{ref}} \right)
	\end{equation*}
	Subtracting this relation from the online Petrov-Galerkin reduced-order problem~\eqref{eq:petrov_galerkin} isolates the 
	coordinate error vector within the null space of the projected operator:
	\begin{equation*}
		\left(\mathbf{W}^T \mathbf{A}(\boldsymbol{\mu}^\star) \mathbf{V}\right) \left( \mathbf{y}(\boldsymbol{\mu}^\star) - 
		\mathbf{y}_i \right) = \mathbf{0} \end{equation*} 
		Since the reduced-order system matrix $\mathbf{W}^T \mathbf{A}(\boldsymbol{\mu}^\star) \mathbf{V}$ is non-singular by 
		hypothesis, its null space is trivial, uniquely forcing $\mathbf{y}(\boldsymbol{\mu}^\star) = \mathbf{y}_i$. It 
		immediately follows from the unaugmented affine reconstruction mapping -- wherein $\bar{\mathbf{x}} 
		(\boldsymbol{\mu}^\star) = \mathbf{0}$ -- that $\tilde{\mathbf{x}}(\boldsymbol{\mu}^\star) = \mathbf{V}\mathbf{y}
		(\boldsymbol{\mu}^\star) + \mathbf{x}_{\mathrm{ref}} = \mathbf{V}\mathbf{y}_i + \mathbf{x}_{\mathrm{ref}} = 
		\mathbf{x}(\boldsymbol{\mu}^\star)$, completing the proof.
\end{proof}

From an engineering standpoint, the assumption of a non-singular system matrix $\mathbf{W}^T \mathbf{A}(\boldsymbol{\mu}^\star)
\mathbf{V}$ represents a critical point of vulnerability in standard production workflows. Many practical high-fidelity applications 
-- such as structural analysis models of an aircraft undergoing free-flight maneuvers without fixed boundary conditions -- possess 
true or numerical rigid-body modes that render the underlying operator singular or pathologically ill-conditioned. When this uniqueness
pathway is compromised, the exact containment of a physical state within the trial space no longer guarantees exact physical recovery;
instead, the non-trivial or poorly conditioned operator kernel allows the online solver to collapse and lock onto a training 
coordinate vector, triggering a severe breakdown of parametric smoothness.

\begin{remark}[Generality to nonlinear systems]
	\label{rem:nonlinear_prop31}
	Although Proposition~\ref{pro:p1} is formalized above within the linear regime for expository clarity, its core mechanism is 
	fundamentally geometric rather than algebraic. Because the exact recovery property depends strictly on the exact containment 
	of the high-dimensional physical state within the trial space ansatz and the local invertibility of the projected equations, 
	this identity holds true for general nonlinear residual operators $\mathbf{r}(\mathbf{x}; \boldsymbol{\mu}) = \mathbf{0}$. 
	That is, if the exact nonlinear state is perfectly representable by the shifted affine trial space at a query parameter 
	$\boldsymbol{\mu}^\star$, a non-singular projected Jacobian guarantees that the online reduced-order model will uniquely and 
	perfectly reconstruct the high-fidelity solution.
\end{remark}

While Proposition~\ref{pro:p1} establishes how the exact containment of a state within the snapshot subspace leads to a collapse onto 
a training snapshot, the snapping pathology can also be driven by structural insensitivity within the projected operators themselves. 
Specifically, when the left test space filters out parameter-dependent variations in the governing equations, the PROM becomes blind 
to changes in the parameter domain, causing the solution to stagnate or ``snap'' to a constant state across distinct parameter queries.
This mechanism is formalized in the following proposition.

\begin{proposition}\label{pro:p2}
	Consider again the parametric linear HDM~\eqref{eq:linear_hdm} within an affine reconstruction framework. Consider also two 
	queried parameter vectors
	$\boldsymbol{\mu}_1^{\star}$ and $\boldsymbol{\mu}_2^{\star}$ in $\mathcal{D}$. If the following four conditions hold:
	\begin{enumerate}[label=(\arabic*)]
		\item $\mathbf{W}^T \left( \mathbf{b}(\boldsymbol{\mu}_2^{\star}) - \mathbf{b}(\boldsymbol{\mu}_1^{\star}) \right) = 
			\mathbf{0}$,
		\item $\mathbf{W}^T \left( \mathbf{A}(\boldsymbol{\mu}_2^{\star}) - \mathbf{A}(\boldsymbol{\mu}_1^{\star}) \right) 
			\mathbf{V} = \mathbf{0}$,
		\item $\mathbf{W}^T \left( \mathbf{A}(\boldsymbol{\mu}_2^{\star}) - \mathbf{A}(\boldsymbol{\mu}_1^{\star}) \right) 
			\mathbf{x}_{\mathrm{ref}} = \mathbf{0}$,
		\item $\mathbf{W}^T \mathbf{A}(\boldsymbol{\mu}_1^{\star}) \mathbf{V}$ is invertible,
	\end{enumerate}
	then the reconstructed solutions from the PROM are identical, yielding $\tilde{\mathbf{x}}(\boldsymbol{\mu}_2^{\star}) = 
	\tilde{\mathbf{x}}(\boldsymbol{\mu}_1^{\star})$.
\end{proposition}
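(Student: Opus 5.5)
The plan is to show that the two reduced systems \eqref{eq:petrov_galerkin}, posed at $\boldsymbol{\mu}_1^\star$ and at $\boldsymbol{\mu}_2^\star$, are literally the same linear system. Then condition (4) gives uniqueness, the generalized coordinates coincide, and so do the reconstructions.

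\textbf{Step 1: equal reduced matrices.} Write the reduced matrix at $\boldsymbol{\mu}_2^\star$ as
\begin{equation*}
\mathbf{W}^T\mathbf{A}(\boldsymbol{\mu}_2^\star)\mathbf{V} = \mathbf{W}^T\mathbf{A}(\boldsymbol{\mu}_1^\star)\mathbf{V} + \mathbf{W}^T\left(\mathbf{A}(\boldsymbol{\mu}_2^\star)-\mathbf{A}(\boldsymbol{\mu}_1^\star)\right)\mathbf{V}.
\end{equation*}
The last term vanishes by condition (2). The reduced matrices at the two queries are therefore equal, and by condition (4) both are invertible.

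\textbf{Step 2: equal reduced right-hand sides.} Decompose the difference of the right-hand sides as
\begin{equation*}
\mathbf{W}^T\left(\mathbf{b}(\boldsymbol{\mu}_2^\star)-\mathbf{A}(\boldsymbol{\mu}_2^\star)\mathbf{x}_{\mathrm{ref}}\right) - \mathbf{W}^T\left(\mathbf{b}(\boldsymbol{\mu}_1^\star)-\mathbf{A}(\boldsymbol{\mu}_1^\star)\mathbf{x}_{\mathrm{ref}}\right) = \mathbf{W}^T\left(\mathbf{b}(\boldsymbol{\mu}_2^\star)-\mathbf{b}(\boldsymbol{\mu}_1^\star)\right) - \mathbf{W}^T\left(\mathbf{A}(\boldsymbol{\mu}_2^\star)-\mathbf{A}(\boldsymbol{\mu}_1^\star)\right)\mathbf{x}_{\mathrm{ref}}.
\end{equation*}
The first term vanishes by condition (1) and the second by condition (3). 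Hence the reduced right-hand sides are equal as well.

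\textbf{Step 3: identical coordinates and reconstructions.} Since both queries lead to the same invertible system, I solve each by inverting the common matrix, which gives $\mathbf{y}(\boldsymbol{\mu}_2^\star) = \mathbf{y}(\boldsymbol{\mu}_1^\star)$. This mirrors the null-space argument in the proof of Proposition~\ref{pro:p1}: subtracting the two systems gives $\mathbf{W}^T\mathbf{A}(\boldsymbol{\mu}_1^\star)\mathbf{V}\,(\mathbf{y}(\boldsymbol{\mu}_2^\star)-\mathbf{y}(\boldsymbol{\mu}_1^\star))=\mathbf{0}$, so the difference is zero. The unaugmented reconstruction $\tilde{\mathbf{x}} = \mathbf{V}\mathbf{y} + \mathbf{x}_{\mathrm{ref}}$ does not depend on $\boldsymbol{\mu}$ except through $\mathbf{y}$, so $\tilde{\mathbf{x}}(\boldsymbol{\mu}_2^\star)=\tilde{\mathbf{x}}(\boldsymbol{\mu}_1^\star)$.

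There is no real obstacle. The one point needing care is that the statement says only ``affine reconstruction framework''. I will interpret this as the unaugmented setting $\bar{\mathbf{x}}(\boldsymbol{\mu})=\mathbf{0}$ with the same $\mathbf{x}_{\mathrm{ref}}$, $\mathbf{V}$ and $\mathbf{W}$ at both queries. This matters because a nonzero, parameter-dependent shift would add the term $\mathbf{W}^T\mathbf{A}(\boldsymbol{\mu})\bar{\mathbf{x}}(\boldsymbol{\mu})$ to the right-hand side, and that term is not covered by conditions (1)--(3).
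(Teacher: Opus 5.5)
Your proposal is correct and follows essentially the same route as the paper: conditions (1)--(3) make the reduced system at $\boldsymbol{\mu}_2^\star$ coincide with the one at $\boldsymbol{\mu}_1^\star$, and the invertibility in condition (4) then forces $\mathbf{y}(\boldsymbol{\mu}_2^\star)=\mathbf{y}(\boldsymbol{\mu}_1^\star)$ and hence equal reconstructions. You also note explicitly that the two reduced matrices are equal, so the system at $\boldsymbol{\mu}_2^\star$ is itself invertible and its solution exists and is unique; the paper leaves this small point implicit.
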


\begin{proof}
	We evaluate the unaugmented Petrov-Galerkin PROM system~\eqref{eq:petrov_galerkin} at the online parameter 
	query~$\boldsymbol{\mu}_2^\star$:
	\begin{equation*}
		\left(\mathbf{W}^T\mathbf{A}(\boldsymbol{\mu}_2^\star)\mathbf{V}\right)\mathbf{y}(\boldsymbol{\mu}_2^\star) = 
		\mathbf{W}^T\left(\mathbf{b}(\boldsymbol{\mu}_2^\star) - \mathbf{A}(\boldsymbol{\mu}_2^\star)
		\mathbf{x}_{\mathrm{ref}}\right)
	\end{equation*}
	Invoking Conditions~(1), (2), and (3) allows for the direct substitution of the projected operators 
	at~$\boldsymbol{\mu}_2^\star$ with their equivalent expressions evaluated at ~$\boldsymbol{\mu}_1^\star$, simplifying the 
	system directly to:
	\begin{equation*}
		\left(\mathbf{W}^T\mathbf{A}(\boldsymbol{\mu}_1^\star)\mathbf{V}\right)\mathbf{y}(\boldsymbol{\mu}_2^\star) = \mathbf{W}^T
		\left(\mathbf{b}(\boldsymbol{\mu}_1^\star) - \mathbf{A}(\boldsymbol{\mu}_1^\star)\mathbf{x}_{\mathrm{ref}}\right)
	\end{equation*}
	The right-hand side of this expression matches the definition of the reduced-order problem evaluated 
	at~$\boldsymbol{\mu}_1^\star$, yielding:
	\begin{equation*}
		\left(\mathbf{W}^T\mathbf{A}(\boldsymbol{\mu}_1^\star)\mathbf{V}\right)\mathbf{y}(\boldsymbol{\mu}_2^\star) = 
		\left(\mathbf{W}^T\mathbf{A}(\boldsymbol{\mu}_1^\star)\mathbf{V}\right)\mathbf{y} (\boldsymbol{\mu}_1^\star) 
	\end{equation*}
	Rearranging terms groups the coordinate vectors within the null space of the projected operator:
	\begin{equation*}
		\left(\mathbf{W}^T\mathbf{A}(\boldsymbol{\mu}_1^\star)\mathbf{V}\right)\left(\mathbf{y}(\boldsymbol{\mu}_2^\star) - 
		\mathbf{y}(\boldsymbol{\mu}_1^\star)\right) = \mathbf{0}
	\end{equation*}
	By Condition~(4), the reduced-order matrix~$\mathbf{W}^T\mathbf{A}(\boldsymbol{\mu}_1^\star)\mathbf{V}$ is invertible. Its 
	null space is therefore trivial, uniquely forcing~$\mathbf{y}(\boldsymbol{\mu}_2^\star) = \mathbf{y}
	(\boldsymbol{\mu}_1^\star)$. It follows directly from the affine reconstruction mapping that~$\tilde{\mathbf{x}} 
	(\boldsymbol{\mu}_2^\star) = \tilde{\mathbf{x}}(\boldsymbol{\mu}_1^\star)$, completing the proof.
\end{proof}

By setting $\boldsymbol{\mu}_1^{\star} \in \{\boldsymbol{\mu}_j\}_{j=1}^{n_s}$ as a training parameter and $\boldsymbol{\mu}_2^{\star}
\notin \{\boldsymbol{\mu}_j\}_{j=1}^{n_s}$ as an unseen online query, this proposition directly demonstrates a mathematical mechanism 
behind the snapping pathology. Importantly, Proposition~\ref{pro:p2} establishes a \textit{sufficient condition} for the onset of 
intrinsic snapping; while satisfying these structural criteria guarantees a localized loss of parametric sensitivity, it does not
preclude the pathology from manifesting under alternative analytical conditions. This highlights a fundamental structural driver of 
intrinsic snapping wherein the test space underlying the left ROB $\mathbf{W}$ is orthogonal to the parameter-induced variations of 
the high-dimensional operators, leaving the online PROM insensitive to parameter changes along that specific manifold. Consequently, 
if $\boldsymbol{\mu}_1^{\star}$ corresponds to an offline training snapshot location, the model will rigidly snap and return the same
snapshot solution for $\boldsymbol{\mu}_2^{\star}$, inhibiting smooth interpolation and giving rise to a pathologically localized, 
step-like response across that region of the parameter domain.

The structural insensitivity detailed in Proposition~\ref{pro:p2} relies on a strict orthogonality condition that holds uniformly for 
the entire reduced trial subspace. That is, the parameter-induced operator variations are filtered out regardless of the online state.
However, a weaker, state-dependent sufficient condition can be formulated where the operator perturbations are filtered out solely when
evaluated along the specific trajectory of the online solution vector~$\mathbf{y}(\boldsymbol{\mu}_2^\star)$. We transition now to 
this alternative condition for the onset of the pathology, which is formalized in the following proposition.

\begin{proposition}\label{pro:p3}
	Consider again the parametric linear HDM~\eqref{eq:linear_hdm} within an affine reconstruction framework. Consider also two 
	queried parameter vectors $\boldsymbol{\mu}_1^{\star}$ and $\boldsymbol{\mu}_2^{\star}$ in $\mathcal{D}$. If the following 
	four conditions hold:
	\begin{enumerate}[label=(\arabic*)]
		\item $\mathbf{W}^T \left( \mathbf{b}(\boldsymbol{\mu}_2^{\star}) - \mathbf{b}(\boldsymbol{\mu}_1^{\star}) \right) = 
			\mathbf{0}$,
		\item $\mathbf{W}^T \left( \mathbf{A}(\boldsymbol{\mu}_2^{\star}) - \mathbf{A}(\boldsymbol{\mu}_1^{\star}) \right) 
			\mathbf{V} \mathbf{y}(\boldsymbol{\mu}_2^{\star}) = \mathbf{0}$,
		\item $\mathbf{W}^T \left( \mathbf{A}(\boldsymbol{\mu}_2^{\star}) - \mathbf{A}(\boldsymbol{\mu}_1^{\star}) \right) 
			\mathbf{x}_{\mathrm{ref}} = \mathbf{0}$,
		\item $\mathbf{W}^T \mathbf{A}(\boldsymbol{\mu}_1^{\star}) \mathbf{V}$ is invertible,
	\end{enumerate}
	then the reconstructed solutions from the PROM are identical, yielding $\tilde{\mathbf{x}}(\boldsymbol{\mu}_2^{\star}) = 
	\tilde{\mathbf{x}}(\boldsymbol{\mu}_1^{\star})$.
\end{proposition}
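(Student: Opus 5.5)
The proof follows the template of Proposition~\ref{pro:p2}, with Condition~(2) replaced by its weaker, state-dependent counterpart. The plan is to show that the online coordinate vector $\mathbf{y}(\boldsymbol{\mu}_2^\star)$ also solves the reduced system posed at $\boldsymbol{\mu}_1^\star$. Uniqueness, supplied by Condition~(4), then forces the two coordinate vectors to coincide.

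Concretely, I would start from the unaugmented Petrov-Galerkin PROM~\eqref{eq:petrov_galerkin} evaluated at $\boldsymbol{\mu}_2^\star$. On the left-hand side, split $\mathbf{A}(\boldsymbol{\mu}_2^\star) = \mathbf{A}(\boldsymbol{\mu}_1^\star) + \left(\mathbf{A}(\boldsymbol{\mu}_2^\star) - \mathbf{A}(\boldsymbol{\mu}_1^\star)\right)$. The resulting perturbation term $\mathbf{W}^T\left(\mathbf{A}(\boldsymbol{\mu}_2^\star) - \mathbf{A}(\boldsymbol{\mu}_1^\star)\right)\mathbf{V}\mathbf{y}(\boldsymbol{\mu}_2^\star)$ vanishes by Condition~(2). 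On the right-hand side, Conditions~(1) and~(3) let me replace $\mathbf{W}^T\mathbf{b}(\boldsymbol{\mu}_2^\star)$ by $\mathbf{W}^T\mathbf{b}(\boldsymbol{\mu}_1^\star)$ and $\mathbf{W}^T\mathbf{A}(\boldsymbol{\mu}_2^\star)\mathbf{x}_{\mathrm{ref}}$ by $\mathbf{W}^T\mathbf{A}(\boldsymbol{\mu}_1^\star)\mathbf{x}_{\mathrm{ref}}$. This gives
\begin{equation*}
\left(\mathbf{W}^T\mathbf{A}(\boldsymbol{\mu}_1^\star)\mathbf{V}\right)\mathbf{y}(\boldsymbol{\mu}_2^\star) = \mathbf{W}^T\left(\mathbf{b}(\boldsymbol{\mu}_1^\star) - \mathbf{A}(\boldsymbol{\mu}_1^\star)\mathbf{x}_{\mathrm{ref}}\right).
\end{equation*}
The right-hand side equals $\left(\mathbf{W}^T\mathbf{A}(\boldsymbol{\mu}_1^\star)\mathbf{V}\right)\mathbf{y}(\boldsymbol{\mu}_1^\star)$, since $\mathbf{y}(\boldsymbol{\mu}_1^\star)$ is the PROM solution at $\boldsymbol{\mu}_1^\star$; that solution exists and is unique by Condition~(4). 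Subtracting the two relations places $\mathbf{y}(\boldsymbol{\mu}_2^\star) - \mathbf{y}(\boldsymbol{\mu}_1^\star)$ in the null space of $\mathbf{W}^T\mathbf{A}(\boldsymbol{\mu}_1^\star)\mathbf{V}$, which is trivial by Condition~(4). Applying the affine reconstruction $\tilde{\mathbf{x}} = \mathbf{V}\mathbf{y} + \mathbf{x}_{\mathrm{ref}}$ then gives the claim.

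The only delicate point is the logical status of $\mathbf{y}(\boldsymbol{\mu}_2^\star)$. Condition~(2) is stated in terms of the online solution itself, so the argument must take as given that some $\mathbf{y}(\boldsymbol{\mu}_2^\star)$ solving the PROM at $\boldsymbol{\mu}_2^\star$ exists, and that Condition~(2) holds for it. The argument does not use invertibility of $\mathbf{W}^T\mathbf{A}(\boldsymbol{\mu}_2^\star)\mathbf{V}$. If that matrix is singular, the conclusion simply applies to every solution at $\boldsymbol{\mu}_2^\star$ that satisfies Condition~(2). I would state this interpretation explicitly at the start of the proof, to avoid any circularity. The remaining steps are routine substitutions mirroring the proof of Proposition~\ref{pro:p2}.
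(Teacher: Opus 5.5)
Your proposal is correct and follows the paper's proof essentially step for step. You evaluate the PROM at $\boldsymbol{\mu}_2^\star$, use Conditions~(1)--(3) to turn it into the $\boldsymbol{\mu}_1^\star$ system acting on $\mathbf{y}(\boldsymbol{\mu}_2^\star)$, and conclude via the trivial null space from Condition~(4). Your explicit remark on the logical status of $\mathbf{y}(\boldsymbol{\mu}_2^\star)$, and on the fact that invertibility at $\boldsymbol{\mu}_2^\star$ is never used, is a sound clarification that the paper leaves implicit.
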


\begin{proof}
	We evaluate the unaugmented Petrov-Galerkin PROM system~\eqref{eq:petrov_galerkin} at the online parameter 
	query~$\boldsymbol{\mu}_2^\star$:
	\begin{equation*}
		\left(\mathbf{W}^T\mathbf{A}(\boldsymbol{\mu}_2^\star)\mathbf{V}\right)\mathbf{y}(\boldsymbol{\mu}_2^\star) = 
		\mathbf{W}^T\left(\mathbf{b}(\boldsymbol{\mu}_2^\star) - \mathbf{A}(\boldsymbol{\mu}_2^\star)
		\mathbf{x}_{\mathrm{ref}}\right)
	\end{equation*}
	By invoking the state-dependent constraint in Condition~(2) alongside the forcing and reference state variations from 
	Conditions~(1) and (3), the left- and right-hand sides can be directly substituted with their equivalent projected operators 
	evaluated at~$\boldsymbol{\mu}_1^\star$. This collapses the expression to:
	\begin{equation*}
		\left(\mathbf{W}^T\mathbf{A}(\boldsymbol{\mu}_1^\star)\mathbf{V}\right)\mathbf{y}(\boldsymbol{\mu}_2^\star) = 
		\left(\mathbf{W}^T\mathbf{A}(\boldsymbol{\mu}_1^\star)\mathbf{V}\right)\mathbf{y}(\boldsymbol{\mu}_1^\star)
	\end{equation*}
	Because this resulting system matches the identical algebraic structure obtained in the proof of Proposition~\ref{pro:p2}, 
	the remainder of the proof follows analogously via the invertibility of~$\mathbf{W}^T\mathbf{A}(\boldsymbol{\mu}_1^\star)
	\mathbf{V}$ from Condition~(4), yields~$\mathbf{y}(\boldsymbol{\mu}_2^\star) = \mathbf{y}(\boldsymbol{\mu}_1^\star)$, and 
	consequently dictates that~$\tilde{\mathbf{x}}(\boldsymbol{\mu}_2^\star) = \tilde{\mathbf{x}}(\boldsymbol{\mu}_1^\star)$.
\end{proof}

Proposition~\ref{pro:p3} establishes yet another independent sufficient condition for the onset of intrinsic snapping. The 
fundamental distinction between this formulation and Proposition~\ref{pro:p2} lies in the localization of the matrix 
perturbation constraint. While Proposition~\ref{pro:p2} demands a global orthogonality wherein the test space $\mathbf{W}$ filters out
variations across the entire trial subspace $\operatorname{Range}(\mathbf{V})$, Proposition~\ref{pro:p3} relaxes this requirement, 
demonstrating that snapping will occur if the operator variations are filtered out solely along the specific trajectory of the online
coordinate solution vector $\mathbf{y}(\boldsymbol{\mu}_2^{\star})$. Furthermore, because $\mathbf{x}_{\mathrm{ref}}$ is an arbitrary 
vector outside the range of $\mathbf{V}$, Condition~(3) is required explicitly within the affine setting to cleanly decouple the 
reference state from parameter variations.

\begin{proposition}[Necessary and sufficient condition for stagnation under constant operators]
	\label{pro:p4}
	Consider again the parametric linear HDM~\eqref{eq:linear_hdm} within an affine reconstruction framework, where the system 
	operator is parameter-independent such that $\mathbf{A}(\boldsymbol{\mu}) = \mathbf{A}$. Let $\boldsymbol{\mu}_1^\star, 
	\boldsymbol{\mu}_2^\star \in \mathcal{D}$ be two parameter queries, and assume that $\mathbf{W}^T \mathbf{A} \mathbf{V}$ is 
	invertible. Then, the reconstructed solutions from the PROM are identical, yielding $\tilde{\mathbf{x}}
	(\boldsymbol{\mu}_2^\star) = \tilde{\mathbf{x}}(\boldsymbol{\mu}_1^\star)$ (driven by generalized coordinate stagnation 
	$\mathbf{y}(\boldsymbol{\mu}_2^\star) = \mathbf{y}(\boldsymbol{\mu}_1^\star)$), if and only if:
	\begin{equation} 
		\mathbf{W}^T \left(\mathbf{b}(\boldsymbol{\mu}_2^\star) - \mathbf{b}(\boldsymbol{\mu}_1^\star) \right) = \mathbf{0}
	\end{equation}
\end{proposition}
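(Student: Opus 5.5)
The plan is to exploit the fact that, with $\mathbf{A}(\boldsymbol{\mu}) = \mathbf{A}$, the unaugmented Petrov-Galerkin PROM~\eqref{eq:petrov_galerkin} becomes a linear system with a single, parameter-independent reduced matrix $\mathbf{K} := \mathbf{W}^T\mathbf{A}\mathbf{V}$. Only the right-hand side then carries any parameter dependence. Since $\mathbf{K}$ is invertible by hypothesis, the generalized coordinates are given explicitly by
\begin{equation*}
	\mathbf{y}(\boldsymbol{\mu}) = \mathbf{K}^{-1}\mathbf{W}^T\left(\mathbf{b}(\boldsymbol{\mu}) - \mathbf{A}\mathbf{x}_{\mathrm{ref}}\right).
\end{equation*}
Subtracting the expressions at $\boldsymbol{\mu}_2^\star$ and $\boldsymbol{\mu}_1^\star$ cancels the reference-state term $\mathbf{A}\mathbf{x}_{\mathrm{ref}}$ identically. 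This yields the key identity
\begin{equation*}
	\mathbf{y}(\boldsymbol{\mu}_2^\star) - \mathbf{y}(\boldsymbol{\mu}_1^\star) = \mathbf{K}^{-1}\mathbf{W}^T\left(\mathbf{b}(\boldsymbol{\mu}_2^\star) - \mathbf{b}(\boldsymbol{\mu}_1^\star)\right),
\end{equation*}
from which both directions of the equivalence follow.

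For sufficiency, if $\mathbf{W}^T(\mathbf{b}(\boldsymbol{\mu}_2^\star) - \mathbf{b}(\boldsymbol{\mu}_1^\star)) = \mathbf{0}$, the identity gives $\mathbf{y}(\boldsymbol{\mu}_2^\star) = \mathbf{y}(\boldsymbol{\mu}_1^\star)$. The affine reconstruction $\tilde{\mathbf{x}} = \mathbf{V}\mathbf{y} + \mathbf{x}_{\mathrm{ref}}$ then gives equal reconstructed states. Alternatively, this direction is a special case of Proposition~\ref{pro:p2}: Conditions~(2) and (3) hold trivially because $\mathbf{A}(\boldsymbol{\mu}_2^\star) - \mathbf{A}(\boldsymbol{\mu}_1^\star) = \mathbf{0}$, and Condition~(4) is the invertibility assumption. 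I would note this to tie the result to the earlier analysis.

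For necessity, assume stagnation $\mathbf{y}(\boldsymbol{\mu}_2^\star) = \mathbf{y}(\boldsymbol{\mu}_1^\star)$. Multiplying the key identity by $\mathbf{K}$ gives $\mathbf{W}^T(\mathbf{b}(\boldsymbol{\mu}_2^\star) - \mathbf{b}(\boldsymbol{\mu}_1^\star)) = \mathbf{0}$ directly.

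The main obstacle is the passage between reconstructed-state equality and coordinate equality, which the statement brackets together. We have $\tilde{\mathbf{x}}(\boldsymbol{\mu}_2^\star) = \tilde{\mathbf{x}}(\boldsymbol{\mu}_1^\star)$ if and only if $\mathbf{V}(\mathbf{y}(\boldsymbol{\mu}_2^\star) - \mathbf{y}(\boldsymbol{\mu}_1^\star)) = \mathbf{0}$. I would close this gap by observing that $\mathbf{V}$ must have full column rank. This holds by construction, since it is an orthonormal POD basis. More intrinsically, invertibility of $\mathbf{K} = \mathbf{W}^T\mathbf{A}\mathbf{V}$ already forces $\operatorname{rank}(\mathbf{V}) = n$, because $\mathbf{V}\mathbf{z} = \mathbf{0}$ implies $\mathbf{K}\mathbf{z} = \mathbf{0}$ and hence $\mathbf{z} = \mathbf{0}$. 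Therefore equality of the reconstructions is equivalent to coordinate stagnation, and the chain of equivalences is complete.
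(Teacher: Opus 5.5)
Your proof is correct and follows essentially the paper's route: you subtract the reduced systems at $\boldsymbol{\mu}_1^\star$ and $\boldsymbol{\mu}_2^\star$, where the common operator $\mathbf{W}^T\mathbf{A}\mathbf{V}$ and the $\mathbf{A}\mathbf{x}_{\mathrm{ref}}$ term cancel, and then invoke invertibility, with your explicit identity $\mathbf{y}(\boldsymbol{\mu}_2^\star)-\mathbf{y}(\boldsymbol{\mu}_1^\star)=(\mathbf{W}^T\mathbf{A}\mathbf{V})^{-1}\mathbf{W}^T\left(\mathbf{b}(\boldsymbol{\mu}_2^\star)-\mathbf{b}(\boldsymbol{\mu}_1^\star)\right)$ simply handling both directions at once. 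Your extra observation that invertibility of $\mathbf{W}^T\mathbf{A}\mathbf{V}$ forces $\mathbf{V}$ to have full column rank is a genuine improvement: the paper's necessity argument starts directly from $\mathbf{y}(\boldsymbol{\mu}_2^\star)=\mathbf{y}(\boldsymbol{\mu}_1^\star)$ and never justifies passing from $\tilde{\mathbf{x}}(\boldsymbol{\mu}_2^\star)=\tilde{\mathbf{x}}(\boldsymbol{\mu}_1^\star)$ to coordinate stagnation.
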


\begin{proof}
	$(\implies \text{Necessity})$ Assume that the online generalized coordinates stagnate such that $\mathbf{y}
	(\boldsymbol{\mu}_2^\star) = \mathbf{y}(\boldsymbol{\mu}_1^\star)$. Evaluating the unaugmented Petrov-Galerkin PROM at 
	$\boldsymbol{\mu}_1^\star$ and $\boldsymbol{\mu}_2^\star$ yields:
	\begin{align*}
		\left(\mathbf{W}^T \mathbf{A} \mathbf{V}\right) \mathbf{y}(\boldsymbol{\mu}_1^\star) &= \mathbf{W}^T \left(
		\mathbf{b}(\boldsymbol{\mu}_1^\star) - \mathbf{A} \mathbf{x}_{\text{ref}} \right) \\
		\left(\mathbf{W}^T \mathbf{A} \mathbf{V}\right) \mathbf{y}(\boldsymbol{\mu}_2^\star) &= \mathbf{W}^T \left( 
		\mathbf{b}(\boldsymbol{\mu}_2^\star) - \mathbf{A} \mathbf{x}_{\text{ref}} \right)
	\end{align*}
	Subtracting the first equation from the second gives:
	\begin{equation*}
		\left(\mathbf{W}^T \mathbf{A} \mathbf{V}\right) \left( \mathbf{y}(\boldsymbol{\mu}_2^\star) - \mathbf{y}(\boldsymbol{\mu}_1^\star) 
		\right) = \mathbf{W}^T\left( \mathbf{b}(\boldsymbol{\mu}_2^\star) - \mathbf{b}(\boldsymbol{\mu}_1^\star) \right)
	\end{equation*}
	Substituting $\mathbf{y}(\boldsymbol{\mu}_2^\star) - \mathbf{y}(\boldsymbol{\mu}_1^\star) = \mathbf{0}$ into the left-hand 
	side directly yields:
	\begin{equation*}
		\mathbf{W}^T \left( \mathbf{b}(\boldsymbol{\mu}_2^\star) - \mathbf{b}(\boldsymbol{\mu}_1^\star) \right) = \mathbf{0}
	\end{equation*}

	$(\impliedby \text{Sufficiency})$ Conversely, assume that $\mathbf{W}^T \left( \mathbf{b}(\boldsymbol{\mu}_2^\star) - 
	\mathbf{b}(\boldsymbol{\mu}_1^\star) \right) =\mathbf{0}$. The Petrov-Galerkin reduced-order system evaluated at 
	$\boldsymbol{\mu}_2^\star$ expands to:
	\begin{equation*}
		\left(\mathbf{W}^T \mathbf{A} \mathbf{V}\right) \mathbf{y}(\boldsymbol{\mu}_2^\star) = \mathbf{W}^T 
		\left( \mathbf{b}(\boldsymbol{\mu}_2^\star) - \mathbf{A} \mathbf{x}_{\text{ref}} \right) = \mathbf{W}^T 
		\left( \mathbf{b}(\boldsymbol{\mu}_1^\star) - \mathbf{A} \mathbf{x}_{\text{ref}} \right) = \left(\mathbf{W}^T 
		\mathbf{A} \mathbf{V}\right) \mathbf{y} (\boldsymbol{\mu}_1^\star)
	\end{equation*}
	Grouping terms isolates the coordinate difference within the null space of the projected operator:
	\begin{equation*}
		\left(\mathbf{W}^T \mathbf{A} \mathbf{V}\right) \left( \mathbf{y}(\boldsymbol{\mu}_2^\star) - \mathbf{y}
		(\boldsymbol{\mu}_1^\star) \right) = \mathbf{0}
	\end{equation*}
	Because $\mathbf{W}^T \mathbf{A} \mathbf{V}$ is invertible by hypothesis, its null space is trivial, uniquely forcing 
	$\mathbf{y}(\boldsymbol{\mu}_2^\star) = \mathbf{y}(\boldsymbol{\mu}_1^\star)$. It follows immediately from the unaugmented 
	affine reconstruction mapping that $\tilde{\mathbf{x}}(\boldsymbol{\mu}_2^\star) =\mathbf{V}\mathbf{y}
	(\boldsymbol{\mu}_2^\star) + \mathbf{x}_{\text{ref}} = \mathbf{V}\mathbf{y}(\boldsymbol{\mu}_1^\star) + 
	\mathbf{x}_{\text{ref}} = \tilde{\mathbf{x}}(\boldsymbol{\mu}_1^\star)$, completing the proof.
\end{proof}

Proposition~\ref{pro:p4} highlights a severely pessimistic outlook for baseline PMOR configurations operating on constant-operator 
systems. Because the left test basis is highly rectangular ($\mathbf{W} \in \mathbb{R}^{N \times n}$ with $n \ll N$), its transpose possesses 
an immense null space of dimension $N-n$. Elevating the projected forcing constraint $\mathbf{W}^T \Delta \mathbf{b} = \mathbf{0}$ to 
an if-and-only-if equivalence demonstrates that preventing generalized coordinate stagnation hinges entirely on the low-dimensional 
trajectory of high-dimensional physical forcing variations avoiding this multi-million-dimensional numerical blind spot -- a structural
requirement that is routinely violated in practical operational settings.

\subsection{Mitigation strategy for intrinsic snapping}
\label{subsec:mitigation_linear}

Driven by the theoretical limits established in the preceding section, we evaluate this mitigation framework within the most 
algebraically severe environment: the parameter-independent operator configuration ($\mathbf{A}(\boldsymbol{\mu}) = \mathbf{A}$). This
scoping choice is mathematically deliberate; because the constant operator regime strips away the model's algebraic defenses and 
exposes it to a singular point of failure, it represents the definitive structural bottleneck for linear PROM stability. Rather than attempting to restrict the physical trajectory of the parameter-dependent forcing vector $\mathbf{b}(\boldsymbol{\mu})$ -- which is 
dictated entirely by the underlying application physics -- the strategy proposed herein focuses on actively altering the projection 
mechanics. By structurally forcing a break in the orthogonality condition $\mathbf{W}^T \Delta \mathbf{b} = \mathbf{0}$, we ensure 
that the reduced-order operator retains dynamic sensitivity to parametric variations across the entire domain $\mathcal{D}$, thereby 
systematically eliminating the mathematical driver behind pathologically rigid snapshot latching. By proving that the framework 
restores continuous sensitivity under thisworst-case scenario, we establish a definitive lower bound for its efficacy across more 
general linear regimes.

\subsubsection{Affine state augmentation via a parameter-dependent shift vector}

To structurally prevent the parametric variations of the linear forcing vector from disappearing into the null space of the left test 
basis, we instantiate the general projected PROM system~\eqref{eq:petrov_galerkin} under the parameter-independent operator 
configuration ($\mathbf{A}(\boldsymbol{\mu}) = \mathbf{A}$). By directly applying the abstract shifted projection expression, the 
augmented linear PROM simplifies to:
\begin{equation}
    \left(\mathbf{W}^T\mathbf{A}\mathbf{V}\right)\mathbf{y}(\boldsymbol{\mu}) = \mathbf{W}^T\left(\mathbf{b}(\boldsymbol{\mu}) - \mathbf{A}\mathbf{x}_{\mathrm{ref}} - \mathbf{A}\bar{\mathbf{x}}(\boldsymbol{\mu})\right)
	\label{eq:augmented_prom_linear}
\end{equation}
The appearance of the projected shift term $\mathbf{W}^T\mathbf{A}\bar{\mathbf{x}}(\boldsymbol{\mu})$ on the right-hand side directly 
alters the mapping between the parameter domain and the projected forcing space, offering an explicit algebraic mechanism to 
circumvent the intrinsic snapping limits of the baseline framework.

Let $\{\mathbf{x}(\boldsymbol{\mu}_j)\}_{j=1}^{n_s}$ represent the set of high-dimensional physical state snapshots computed offline 
at the training locations, corresponding to the high-fidelity forcing vectors $\{\mathbf{b}(\boldsymbol{\mu}_j)\}_{j=1}^{n_s}$. In this
linear regime, the parameter-dependent shift vector $\bar{\mathbf{x}}(\boldsymbol{\mu})$ is defined via an explicit algebraic map 
determined by the parametric variations of the high-fidelity forcing vector:
\begin{equation}
	\bar{\mathbf{x}}(\boldsymbol{\mu}) = \mathbf{A}^{-1} \left( \mathbf{b}(\boldsymbol{\mu}) - \sum_{j=1}^{n_s} 
	\phi_j(\boldsymbol{\mu}) \mathbf{b}(\boldsymbol{\mu}_j) \right)
	\label{eq:forcing_vector}
\end{equation}
where $\phi_j(\boldsymbol{\mu}) : \mathbb{R}^{N_\mu} \to \mathbb{R}$ are multi-dimensional, scalar-valued parameter space interpolants
(such as Radial Basis Functions) structurally constrained to satisfy the standard Kronecker $\delta$-property at the training 
coordinates, yielding $\phi_j(\boldsymbol{\mu}_i) = \delta_{ji}$ for any $i,j \in \{1, \dots, n_s\}$.

It is worth noting that while Equation~\eqref{eq:forcing_vector} introduces the high-dimensional inverse matrix $\mathbf{A}^{-1}$ for 
theoretical completeness, its evaluation does not entail an expensive online high-dimensional linear solve. Distributing the left 
projection operators across the shift vector expression yields:
\begin{equation}
	\mathbf{W}^T\mathbf{A}\bar{\mathbf{x}}(\boldsymbol{\mu}) = \mathbf{W}^T\mathbf{b}(\boldsymbol{\mu}) - \sum_{j=1}^{n_s} 
	\phi_j(\boldsymbol{\mu}) \mathbf{W}^T\mathbf{b} (\boldsymbol{\mu}_j)
	\label{eq:projected_shift_evaluation}
\end{equation}
Because the composition $\mathbf{A}\mathbf{A}^{-1}$ evaluates identically to the identity operator prior to projection, the 
high-dimensional matrix inversion is analytically bypassed online. Substituting this simplified 
relation~\eqref{eq:projected_shift_evaluation} directly back into the augmented PROM system~\eqref{eq:augmented_prom_linear} reveals 
that the exact system of equations solved online simplifies to:
\begin{equation*}
	\left(\mathbf{W}^T\mathbf{A}\mathbf{V}\right)\mathbf{y}(\boldsymbol{\mu}) = \sum_{j=1}^{n_s} \phi_j(\boldsymbol{\mu}) 
	\mathbf{c}_j - \mathbf{W}^T\mathbf{A} \mathbf{x}_{\mathrm{ref}}
	\label{eq:final_online_assembly}
\end{equation*}
where the constant, low-dimensional vectors~$\mathbf{c}_j \in \mathbb{R}^n$ are pre-assembled once offline according to:
\begin{equation}
	\mathbf{c}_j = \mathbf{W}^T\mathbf{b}(\boldsymbol{\mu}_j)
\end{equation}
Consequently, evaluating the effect of the shift vector online requires no high-dimensional operations whatsoever. The online 
realization proceeds in three trivial steps: (i) evaluate the scalar interpolants~$\phi_j(\boldsymbol{\mu})$ at the query 
parameter~$\boldsymbol{\mu}$, (ii) compute the linear combination of the pre-assembled low-dimensional vectors to form the 
reconstructed forcing state $\mathbf{b}_{sh} = \sum_{j=1}^{n_s} \phi_j(\boldsymbol{\mu}) \mathbf{c}_j$, and (iii) solve the remaining 
$n \times n$ dense linear system for the coordinates~$\mathbf{y}(\boldsymbol{\mu})$.

This explicit construction injects the missing spatial information back into the trial space representation prior to projection, 
successfully breaking the test-space orthogonality blind spot online while requiring no spatial PDE solves during the offline setup 
phase. Critically, this structural perturbation modifies only the kinematics of the reduced-order trial space representation; it does
not alter the underlying physics or the high-fidelity problem of interest. Because the transformation is algebraic and evaluated 
directly within the residual operator, the exact solution manifold remains invariant, ensuring that near-field fidelity targets are 
uncompromised. As will be rigorously demonstrated in the next subsection, this structural perturbation ensures the model retains 
dynamic sensitivity across the entire parameter domain without introducing modeling bias.

\subsubsection{Conditions for breaking subspace orthogonality}
\label{sec:breaking_orthogonality}

To demonstrate how the augmented formulation~\eqref{eq:augmented_prom_linear} systematically eliminates the snapping pathology, we 
evaluate the conditions for generalized coordinate stagnation when an unseen online parameter $\boldsymbol{\mu}_2^\star = 
\boldsymbol{\mu}^\star$ is queried, yielding a new high-fidelity forcing vector $\mathbf{b}(\boldsymbol{\mu}_2^\star)$. We analyze the
system perturbations relative to an offline training snapshot location $\boldsymbol{\mu}_1^\star = \boldsymbol{\mu}_i$, where by 
virtue of the snapshot reproduction property, $\bar{\mathbf{x}} (\boldsymbol{\mu}_1^\star) = \mathbf{0}$. Let $\Delta \mathbf{b} = 
\mathbf{b}(\boldsymbol{\mu}_2^\star) - \mathbf{b} (\boldsymbol{\mu}_1^\star)$ and $\Delta \bar{\mathbf{x}} = \bar{\mathbf{x}}
(\boldsymbol{\mu}_2^\star) - \bar{\mathbf{x}}(\boldsymbol{\mu}_1^\star) = \bar{\mathbf{x}} (\boldsymbol{\mu}_2^\star)$.

By tracking the difference between the projected equations under the augmented framework, the condition required to force a 
pathologically rigid generalized coordinate stagnation simplifies to:
\begin{equation}
	\mathbf{W}^T(\Delta \mathbf{b} - \mathbf{A}\Delta\bar{\mathbf{x}}) = \mathbf{0}
	\label{eq:stagnation_condition}
\end{equation}
In a baseline linear PROM where $\bar{\mathbf{x}}(\boldsymbol{\mu}) = \mathbf{0}$, snapping is mathematically guaranteed the moment 
the physical forcing perturbation $\Delta \mathbf{b}$ aligns with the $(N-n)$-dimensional null space of $\mathbf{W}^T$. However, 
substituting the dynamically constructed forcing map~\eqref{eq:forcing_vector} into~\eqref{eq:stagnation_condition} yields:
\begin{equation*}
	\mathbf{W}^T \left( \left(\mathbf{b}(\boldsymbol{\mu}_2^\star) - \mathbf{b}(\boldsymbol{\mu}_1^\star)\right) - 
	\mathbf{A}\left[ \mathbf{A}^{-1}\left( \mathbf{b}(\boldsymbol{\mu}_2^\star) - \sum_{j=1}^{n_s} 
	\phi_j(\boldsymbol{\mu}_2^\star) \mathbf{b}(\boldsymbol{\mu}_j) \right) \right] \right) = \mathbf{0}
\end{equation*}
Because the high-dimensional operator composition $\mathbf{A}\mathbf{A}^{-1}$ evaluates to the identity matrix prior to projection, 
the online forcing vector $\mathbf{b}(\boldsymbol{\mu}_2^\star)$ cancels out perfectly from the expression. The terms rearrange and 
simplify directly to:
\begin{equation}
	\mathbf{W}^T \left( \sum_{j=1}^{n_s} \phi_j(\boldsymbol{\mu}_2^\star) \mathbf{b}(\boldsymbol{\mu}_j) - 
	\mathbf{b}(\boldsymbol{\mu}_1^\star) \right) = \mathbf{0}
	\label{eq:collapsed_stagnation}
\end{equation}

Equation~\eqref{eq:collapsed_stagnation} reveals that the online forcing vector $\mathbf{b}(\boldsymbol{\mu}_2^\star)$ has been 
entirely eliminated from the projection matrix filter. Because $\mathbf{W}^T \mathbf{A} \mathbf{V}$ is nonsingular, 
Equation~\eqref{eq:collapsed_stagnation} elevates to a necessary and sufficient condition for generalized coordinate stagnation 
($\mathbf{y}(\boldsymbol{\mu}_2^\star) = \mathbf{y}(\boldsymbol{\mu}_1^\star)$) under the state-shifted framework. Consequently, 
preventing snapping requires proving that Equation~\eqref{eq:collapsed_stagnation} fails to hold -- i.e., becomes a strict inequality 
-- for unseen query points $\boldsymbol{\mu}_2^\star \neq \boldsymbol{\mu}_1^\star$.

Because the parameter interpolants $\phi_j(\boldsymbol{\mu})$ are continuous, nonconstant functions satisfying 
$\phi_j(\boldsymbol{\mu}_i) = \delta_{ji}$, the projected interpolant $\sum_{j=1}^{n_s} \phi_j(\boldsymbol{\mu}_2^\star) 
\mathbf{W}^T \mathbf{b}(\boldsymbol{\mu}_j)$ cannot identically evaluate to the constant vector $\mathbf{W}^T 
\mathbf{b}(\boldsymbol{\mu}_1^\star)$ over any open neighborhood in $\mathcal{D}$, provided the offline projected forcing vectors 
$\{\mathbf{W}^T \mathbf{b}(\boldsymbol{\mu}_j)\}_{j=1}^{n_s}$ contain distinct physical information. Thus, any potential entry point 
where Equation~\eqref{eq:collapsed_stagnation} holds is restricted to isolated, measure-zero roots. The invertibility of $\mathbf{W}^T
\mathbf{A} \mathbf{V}$ subsequently guarantees that $y(\boldsymbol{\mu}_2^\star) \neq y(\boldsymbol{\mu}_1^\star)$ for almost all 
$\boldsymbol{\mu}_2^\star \neq \boldsymbol{\mu}_1^\star$, formally ruling out wide-neighborhood coordinate stagnation and proving the 
restoration of continuous parametric sensitivity across the linear regime.

\section{Numerical snapping in PMOR for parametric nonlinear problems}
\label{sec:nonlinear}

In this section, we shift our focus to Type~3 (numerical) snapping, using parametric, nonlinear FV-based external flow CFD problems 
as the primary backdrop for our discussion. We establish the physical mechanism of numerical snapping driven by spatial mesh-resolution
scaling and discretization artifacts.

To contextualize this failure mode, we first briefly review the historical mitigation strategy developed within our research group, 
which was inspired by the insights established in~\cite{washabaugh2014enforcing, washabaugh2016faster}. This strategy dynamically 
toggles between descriptor and non-descriptor formulations based on which boundary conditions are governed by the parameter space.

In a FV CFD framework, the high-fidelity, semi-discrete governing equations can be expressed in descriptor form as:
\begin{equation*}
	\mathbf{\Omega} \frac{\mathrm{d}\mathbf{x}(t; \boldsymbol{\mu})}{\mathrm{d}t} + \mathbf{r}(\mathbf{x}(t; \boldsymbol{\mu}); 
	\boldsymbol{\mu}) = \mathbf{0}
\end{equation*}
where $\mathbf{x}(t; \boldsymbol \mu) \in \mathbb{R}^N$ represents the high-dimensional fluid state vector, $\mathbf{\Omega} \in 
\mathbb{R}^{N \times N}$ is the diagonal matrix of cell volumes (or dual cell control volumes) playing the role of a generalized mass 
matrix, and $\mathbf{r}(\mathbf{x}(t; \boldsymbol{\mu}); \boldsymbol{\mu}) \in \mathbb{R}^N$ is the spatial residual operator 
containing flux balances and source terms. The descriptor form retains the explicit physical volumes alongside the differential 
equations, preserving the unmanipulated structure of the governing physical laws. While our primary interest in this work centers on 
steady-state flows, retaining the transient term here facilitates the direct derivation of the non-descriptor form.

The non-descriptor formulation is obtained by premultiplying the system by the inverse of the cell volume matrix, 
$\mathbf{\Omega}^{-1}$, isolating the state derivative:
\begin{equation*}
	\frac{\mathrm{d}\mathbf{x}(t; \boldsymbol{\mu})}{\mathrm{d}t} + \mathbf{\Omega}^{-1} 
	\mathbf{r}(\mathbf{x}(t; \boldsymbol{\mu}); \boldsymbol{\mu}) = \mathbf{0}
\end{equation*}

Restricting both frameworks to steady-state operations $(\mathrm{d}\mathbf{x}(t; \boldsymbol{\mu})/\mathrm{d}t = \mathbf{0})$, the 
corresponding time-independent residual systems reduce to:
\begin{align}
	\text{\textit{Descriptor form:}} \quad & \mathbf{r}(\mathbf{x}(\boldsymbol{\mu}); \boldsymbol{\mu}) = \mathbf{0} 
	\label{eq:DF}\\
	\text{\textit{Non-descriptor form:}} \quad & \mathbf{\Omega}^{-1} \mathbf{r}(\mathbf{x}(\boldsymbol{\mu}); \boldsymbol{\mu}) 
	= \mathbf{0} \label{eq:NDF}
\end{align}

In the following subsections, we demonstrate how grid-refinement disparities in FV meshes interact with these two steady-state 
formulations to induce or alleviate numerical snapping.

\subsection{Common numerical snapping mechanism in external flow CFD}
\label{sec:numerical_snapping_mechanism}

In external flow CFD applications, the computational fluid domain is partitioned into distinct spatial regions characterized by vastly
different spatial grid resolutions. Typically, the near-field region of the domain is finely discretized to accurately capture 
boundary layers and resolve strong velocity and pressure gradients localized near solid wall boundaries. Conversely, the far-field 
region of the domain is discretized with significantly coarser elements, as the flow solution there is close to uniform, particularly 
when the far-field boundary is positioned at a long distance away from the obstacle and a lower order (first or second-order) absorbing
boundary condition is applied on the far-field boundary.

Consequently, in many CFD codes, by default, the non-descriptor form of the governing discrete equations~\eqref{eq:NDF} is solved in 
practice so that, given a tolerance $\epsilon$ for the convergence of the typically iterative algebraic flow solver, the solution is 
more accurate in the near-field region than if the descriptor form~\eqref{eq:DF} was solved. As we will see below, this 
grid-resolution-dependent equation scaling has severe consequences on Type~3 numerical snapping.

Consider first the case where the parameter space $\mathcal{D}$ contains parameters $\boldsymbol{\mu}$ that affect the far-field 
boundary conditions -- for example, freestream Mach number $M_{\infty}$, angle of attack $\alpha$, and sideslip angle $\beta$ -- but 
does not contain parameters that affect the wall geometry or wall boundary conditions. 

In this case, it should be noted that in the descriptor form~\eqref{eq:DF}, the dependence of $\mathbf{r}(\mathbf{x}(\boldsymbol{\mu});
\boldsymbol{\mu})$ on such parameters $\boldsymbol{\mu}$ in the far-field is explicit. Furthermore, given that the volumes of the CFD 
mesh cells are large in the far-field, this means that essentially the descriptor residual ``sees'' the dependence on those parameters
$\boldsymbol{\mu}$ that govern the freestream conditions. It follows that in general, this formulation strongly suppresses the onset 
of snapping to a solution associated with different far-field boundary conditions, making such a failure mode highly unlikely. This is
because the residual in the far-field becomes particularly dominant for two reasons: discrepancies in the parameters on which this 
residual depends explicitly; and the sheer magnitude of the residual in this region amplified by the large size of the far-field mesh 
cells.

Following the same reasoning for the non-descriptor form \eqref{eq:NDF}, premultiplying the residual by $\mathbf{\Omega}^{-1}$ scales 
the far-field residuals by the inverse of their large cell volumes ($1/\mathbf{\Omega}_{i}$), drastically dampening the explicit 
parameter dependence at the far-field boundary. Simultaneously, the ultra-small volumes of the finely discretized near-field cells 
yield massive inverse scaling factors ($1/\mathbf{\Omega}_{i}$). This creates a severe numerical imbalance where the unperturbed 
near-field residuals are artificially blown up, dominating the global residual norm and masking the explicit parameter variations occurring at the far-field boundary. During online inference, the solver finds that the heavily amplified near-field residual is minimized 
most effectively by locking onto the nearest training snapshot state. Consequently, the non-descriptor form actively promotes snapping
to a solution associated with different freestream boundary conditions.

Finally, consider the converse case where the parameter space $\mathcal{D}$ contains parameters that affect the wall geometry or wall 
boundary conditions, but does not contain parameters that affect the far-field boundary conditions:

\begin{itemize} 
	\item \textit{Descriptor form:} The explicit parameter dependence now resides in the highly refined near-field region. However,
		because near-field cell volumes $\mathbf{\Omega}_{i}$ are extremely small, the unweighted descriptor residual assigns 
		negligible numerical weight to these explicit geometric or other variations. Instead, the unperturbed, massive cell 
		volumes in the far-field dominate the global residual calculation, blinding the online solver to localized surface 
		variations and promoting numerical snapping to a training snapshot.
	\item \textit{Non-descriptor form:} Premultiplying by $\mathbf{\Omega}^{-1}$ scales the near-field geometric residual 
		variations by $1/\mathbf{\Omega}_{i}$. Because these near-field cell volumes are small, their inverse scaling factors 
		are massive, appropriately amplifying the localized geometric updates. This alignment between explicit parameter 
		dependence and high numerical weighting allows the online solver to smoothly track parameter-induced surface 
		deformations, thereby preventing numerical snapping.  
\end{itemize}

It is worth emphasizing that these scaling mechanics are particularly consequential for residual-minimizing PMOR frameworks, such as 
the LSPG method~\cite{carlberg2011efficient, carlberg2013gnat}. Because LSPG determines the online generalized coordinates 
$\mathbf{y}(\boldsymbol{\mu})$ by directly minimizing the $\ell_2$-norm of the high-dimensional residual vector, any severe numerical 
imbalance introduced by spatial mesh grading or formulation scaling ($\mathbf{\Omega}$ vs. $\mathbf{\Omega}^{-1}$) directly distorts 
the online minimization landscape. Specifically, when the parameter space governs far-field conditions, the non-descriptor form causes
the unperturbed near-field residual to dominate the $\ell_2$-norm, driving the online optimization process into pathologically sharp 
local minima centered at the offline training coordinates. Conversely, when the parameter space governs wall geometry, the descriptor 
form causes the unperturbed far-field residual to dominate the $\ell_2$-norm, inducing the exact same locking behavior. In both 
scenarios, the residual imbalance prevents the solver from tracking parameter-induced updates, forcing the model to rigidly snap onto 
an adjacent training snapshot rather than seeking a smooth parametric interpolation.

\subsection{Corresponding mitigation strategies}

\subsubsection{Historical mitigation: limitation and lack of robustness}
\label{sec:historical_mitigation_limitation}

It is important to stress that the Type~3 numerical snapping mechanism detailed in Section~\ref{sec:numerical_snapping_mechanism} is 
by no means a theoretical speculation, but rather a concrete failure mode observed in practical PMOR 
workflows~\cite{washabaugh2014enforcing, washabaugh2016faster}. To mitigate this behavior, historical practice within our research 
group has relied on dynamically toggling between the descriptor~\eqref{eq:DF} and non-descriptor~\eqref{eq:NDF} formulations based on 
the active boundary parameterization. However, this formulation-toggling paradigm operates merely as an empirical heuristic rather 
than a guaranteed safeguard. While it may stochastically suppress snapshot latching in isolated parameter regimes, it offers no 
structural protection against snapping under routine operational constraints, introduces severe near-field fidelity trade-offs, and 
suffers from two fundamental operational limitations:

\begin{enumerate} 
	\item \textit{Dual-parameter deadlock:} As established in Section~\ref{sec:numerical_snapping_mechanism}, suppressing 
		far-field snapping strictly demands the unweighted descriptor form, whereas suppressing near-field wall snapping 
		requires the volume-scaled non-descriptor form. When the parameter space $\mathcal{D}$ simultaneously governs both 
		far-field conditions (e.g., free-stream Mach number $M_{\infty}$ or angle of attack $\alpha$) and near-field wall 
		characteristics (e.g., surface geometry deformations or wall boundary conditions), neither formulation can resolve 
		both failure modes at once. The practitioner is left structurally deadlocked, as formulation-toggling is inherently 
		incapable of accommodating such coupled parameter spaces.  
	\item \textit{Near-field accuracy degradation:} Even in isolated settings where preventing snapping calls exclusively for the 
		descriptor form (e.g., purely far-field parameterizations), adopting this formulation introduces severe accuracy 
		compromises. Because the unweighted descriptor residual assigns negligible numerical weight to small, highly refined 
		near-field cells, an iterative algebraic solver or residual-minimizing PMOR scheme terminating at a standard solver 
		convergence tolerance $\epsilon$ fails to adequately minimize localized boundary layer errors. Consequently, while 
		the descriptor form may prevent the model from snapping onto a training snapshot, the reconstructed solution lacks 
		near-field fidelity. Under routine operational constraints, the resulting PROM does not snap, but it fails to meet 
		accuracy targets.
\end{enumerate}

\subsubsection{Mitigation via a spatially-varying shift vector}

To address both the fundamental structural deadlock and the accuracy compromises inherent to the historical formulation-toggling 
approach, our proposed strategy rests upon two core methodological pillars:
\begin{enumerate}    
	\item \textit{Universal adoption of the non-descriptor form:} We adhere strictly and universally to the non-descriptor 
		formulation~\eqref{eq:NDF}. By retaining the volume-inverse matrix $\mathbf{\Omega}^{-1}$, the underlying 
		high-fidelity model and the corresponding PROM consistently prioritize error minimization in the highly 
		refined near-field cells, thereby unconditionally preserving near-field simulation accuracy and boundary layer 
		fidelity targets.
	\item \textit{Integration of the unified state-shifting framework:} To neutralize the Type~3 numerical snapping vulnerabilities
		inherently induced by the non-descriptor form, we adopt the general state-shifting framework introduced in 
		Section~\ref{sec:framework}. By embedding a parameter-dependent spatial shift vector $\bar{\mathbf{x}}
		(\boldsymbol{\mu})$ directly into the trial space representation prior to projection, explicit parametric transport 
		and exact boundary condition enforcement are restored to the high-dimensional state \textit{a priori}. Consequently, 
		continuous parametric sensitivity is maintained entirely independently of the mathematical formulation (descriptor or 
		non-descriptor) or equation scaling chosen for the governing PDE.
\end{enumerate}

To concretely instantiate this state-shifting strategy within the non-descriptor CFD framework established above, we consider 
external compressible viscous flows governed by the Navier-Stokes equations and discretized over $N_c$ FV cells (focusing on the 
laminar regime for expository simplicity without loss of generality). In this context, the high-dimensional state vector 
$\mathbf{x}(\boldsymbol{\mu}) \in \mathbb{R}^N$ (with $N = n_{\mathrm{vars}} N_c$, where $n_{\mathrm{vars}}$ is the number of 
conserved fluid variables per cell) is explicitly defined as the concatenated array of local conservative variable vectors:
\begin{equation*} 
	\mathbf{x}(\boldsymbol{\mu}) = 
	\begin{bmatrix} 
		\mathbf{x}_1(\boldsymbol{\mu}) \\[1ex]
		\mathbf{x}_2(\boldsymbol{\mu}) \\[1ex]
		\vdots \\[1ex]
		\mathbf{x}_{N_c}(\boldsymbol{\mu}) 
	\end{bmatrix}, \quad \mathrm{with} \quad 
	\mathbf{x}_i(\boldsymbol{\mu}) = \begin{bmatrix} \rho_i(\boldsymbol{\mu}) \\[1ex] (\rho u)_i(\boldsymbol{\mu}) \\[1ex] 
		(\rho v)_i(\boldsymbol{\mu}) \\[1ex] 
		(\rho w)_i(\boldsymbol{\mu}) \\[1ex] (\rho e_t)_i(\boldsymbol{\mu}) \end{bmatrix} 
	\in \mathbb{R}^5, \quad i = 1, \dots, N_c 
	\label{eq:conservative_state_vector}
\end{equation*}
where $\rho_i$ is the fluid density, $(\rho u)_i$, $(\rho v)_i$, $(\rho w)_i$ are the Cartesian momentum components, and $(\rho e_t)_i$
is the total energy per unit volume evaluated within control volume $i$ -- that is,
\begin{equation} 
	e_t = e + \frac{1}{2}\left(u^2 + v^2 + w^2\right)
\end{equation}
where $e$ is the specific internal energy and $\frac{1}{2}\left(u^2 + v^2 + w^2\right)$ represents the specific kinetic energy of the 
fluid.

It is worth noting that within our proposed augmented affine ansatz~\eqref{eq:affine_approx_shifted} -- which embeds an explicit, 
parameter-dependent spatial tracking mechanism directly into the high-dimensional state representation prior to any projection or 
residual minimization step -- each column of the right ROB $\mathbf{V}$ strictly satisfies homogeneous wall boundary conditions and 
non-parametric far-field conditions (evaluating to zero at those boundaries). Consequently, $\mathbf{x}_{\mathrm{ref}}$ and the 
spatial shift vector $\bar{\mathbf{x}}(\boldsymbol{\mu})$ can be systematically constructed such that the reconstructed state 
$\tilde{\mathbf{x}}(\boldsymbol{\mu})$ unconditionally satisfies the physical wall boundary conditions (whether homogeneous or 
inhomogeneous) as well as the far-field boundary conditions, even when the latter are strongly parameter-dependent.

Furthermore, the static reference state $\mathbf{x}_{\mathrm{ref}}$ is inherently continuous across the spatial domain because it is 
constructed directly as a linear combination of physical snapshot solutions, each of which represents an exact, smooth discrete 
solution to the governing Navier-Stokes equations.

Hence, what remains is to construct the dynamic shift vector $\bar{\mathbf{x}}(\boldsymbol{\mu})$ not only to restore explicit 
parametric tracking and prevent numerical snapping, but also to ensure spatial continuity throughout the fluid domain in the presence 
of parametric far-field and parametric wall boundary conditions. Spatial continuity of the shift vector is of paramount importance: 
any artificial slope jump or spatial discontinuity introduced into $\bar{\mathbf{x}}(\boldsymbol{\mu})$ would generate localized 
non-physical flux imbalances and massive artificial residual spikes in the high-dimensional PDE discretization, degrading solver 
convergence and compromising solution fidelity.

To achieve this spatial smoothness, our approach formulates the construction of the shift vector as the solution to a scaled harmonic 
extension problem. While the underlying CFD state is a multi-component vector field, the spatial diffusion of boundary variations can 
be efficiently governed by a single scalar Laplace equation (in the sense of one scalar unknown $s$ per spatial grid node or cell 
center). At the continuous level, this scalar Laplace problem is formulated as:
\begin{equation} 
	\begin{aligned} 
		\nabla \cdot \left( \Omega_{\mathrm{cell}}^p \, \nabla s \right) &= 0 \quad \text{in} \quad \Omega \\ 
		s &= 1 \quad \text{on} \quad \partial \Omega_{\mathrm{freestream}} \\ 
		s &= 0 \quad \text{on} \quad \partial \Omega_{\mathrm{wall}} 
	\end{aligned} 
	\label{eq:laplace_shift_framework} 
\end{equation}
where $\Omega \subset \mathbb{R}^3$ denotes the three-dimensional computational fluid domain, $\partial 
\Omega_{\mathrm{freestream}}$ and $\partial \Omega_{\mathrm{wall}}$ represent the far-field freestream and solid wall boundaries, 
respectively, $\Omega_{\mathrm{cell}}$ represents the local cell volume, and $p \in [0, 1]$ is a grid-grading scaling exponent.

The primary purpose of the cell-volume weight $\Omega_{\mathrm{cell}}^p$ is to directly control the steepness and spatial decay of 
the Laplace solution $s$ within the near-field region. In aerodynamic applications featuring thin boundary layers, the computational 
grid is heavily refined near solid surfaces, resulting in cell volumes $\Omega_{\mathrm{cell}}$ that are several orders of magnitude 
smaller than those in the far field ($\Omega_{\mathrm{cell}} \ll 1$).

While setting $p = 0$ yields a uniform Laplace operator where freestream variations penetrate deeply toward the wall, choosing $p$ 
closer to $1$ drastically reduces the local diffusion coefficient $\Omega_{\mathrm{cell}}^p$ inside the ultra-small near-field cells. 
This effectively shields the viscous region, forcing $s$ to decay rapidly outside the boundary layer and remain virtually flat 
($s \approx 0$) across the inner near-field mesh. Consequently, far-field freestream parameter updates are prevented from unphysically
disturbing the delicate boundary layer dynamics. For most viscous aerodynamic applications, a choice of $p \in [0.5, 1.0]$ 
provides an optimal balance between near-field boundary layer shielding and smooth spatial transition in the far field.

Once the discrete scalar field $\mathbf{s} \in \mathbb{R}^{N_c}$ is obtained by solving the linear system resulting from the spatial 
discretization of~\eqref{eq:laplace_shift_framework}, the high-dimensional conservative shift vector $\bar{\mathbf{x}}
(\boldsymbol{\mu}) \in \mathbb{R}^{n_{\mathrm{vars}} N_c}$ is constructed by mapping $\mathbf{s}_i$ to the local conservative variables
at each cell $i = 1, \dots, N_c$ according to:
\begin{equation} 
	\bar{\mathbf{x}}_i(\boldsymbol{\mu}) = 
	\begin{bmatrix} 
		\bar{\rho}_i(\boldsymbol{\mu}) \\[1ex]
		\overline{(\rho u)}_i(\boldsymbol{\mu}) \\[1ex] 
		\overline{(\rho v)}_i(\boldsymbol{\mu}) \\[1ex]
		\overline{(\rho w)}_i(\boldsymbol{\mu}) \\[1ex]
		\overline{(\rho e_t)}_i(\boldsymbol{\mu}) 
	\end{bmatrix} = 
	\begin{bmatrix} 
		\rho_{\infty}(\boldsymbol{\mu}) \\[1ex]
		(\rho u)_{\infty}(\boldsymbol{\mu}) \cdot \mathbf{s}_i \\[1ex]
		(\rho v)_{\infty}(\boldsymbol{\mu}) \cdot \mathbf{s}_i \\[1ex] 
		(\rho w)_{\infty}(\boldsymbol{\mu}) \cdot \mathbf{s}_i \\[1ex] 
		\left( (\rho e_t)_{\infty}(\boldsymbol{\mu}) - (\rho e)_{\infty}(\boldsymbol{\mu}) \right) \cdot \mathbf{s}_i + 
		(\rho e)_{\infty}(\boldsymbol{\mu})
	\end{bmatrix}
	\label{eq:shift_vector_construction}
\end{equation}
where $\rho_{\infty}(\boldsymbol{\mu})$, $(\rho u)_{\infty}(\boldsymbol{\mu})$, $(\rho v)_{\infty}(\boldsymbol{\mu})$, 
$(\rho w)_{\infty}(\boldsymbol{\mu})$, $(\rho e_t)_{\infty}(\boldsymbol{\mu})$, and $(\rho e)_{\infty}(\boldsymbol{\mu})$ denote the 
freestream conservative and internal energy quantities dictated by the parameter query $\boldsymbol{\mu}$. By constructing 
$\bar{\mathbf{x}}(\boldsymbol{\mu})$ via equation~\eqref{eq:shift_vector_construction}, boundary parameter variations are smoothly 
diffused into the interior volume mesh, maintaining complete spatial continuity, enforcing parametric boundary conditions 
\textit{a priori}, and entirely neutralizing Type~3 numerical snapping under the non-descriptor formulation.

\begin{remark}[Enforcement of parameter-dependent boundary conditions] 
	\label{rem:boundary_enforcement} 
	Unlike standard reduced-order trial spaces $\operatorname{Range}(\mathbf{V})$, which struggle to satisfy unseen boundary 
	constraints without resorting to penalty formulations that require empirical parameter tuning~\cite{cosimo2016general} or 
	\textit{a posteriori} constraint schemes (such as Lagrange multipliers) that inflate online computational 
	complexity~\cite{kalashnikova2014penalty}, the spatial shift field $\bar{\mathbf{x}}(\boldsymbol{\mu})$ is engineered to 
	satisfy parameter-dependent boundary conditions (e.g., far-field freestream updates or wall surface deformations) 
	\textit{a priori} by construction. Enforcing these boundary conditions directly through $\bar{\mathbf{x}} (\boldsymbol{\mu})$ 
	eliminates steep residual gradients at boundary cells, effectively preventing the grid-resolution scaling artifacts that 
	trigger numerical snapping in non-descriptor formulations.
\end{remark}

\begin{remark}[Scope of wall boundary enforcement: viscous vs. inviscid flows]
	\label{rem:viscous_vs_inviscid_bc}
	The field-theoretic shift vector construction~\eqref{eq:shift_vector_construction} addresses viscous no-slip wall boundary 
	conditions ($\mathbf{u}_{\mathrm{wall}} = \mathbf{0}$), where geometric parameterizations deform mesh coordinates while maintaining a homogeneous velocity state on solid surfaces. In contrast, inviscid slip boundary conditions ($\mathbf{u} \cdot \mathbf{n}(\boldsymbol{\mu}) = 0$) involve a parameter-dependent surface normal field $\mathbf{n}(\boldsymbol{\mu})$ on deforming geometries, requiring the shift vector field to satisfy a dynamic directional constraint rather than a scalar Dirichlet-type extension. Extending the state-shifting framework to guarantee \textit{a priori} enforcement of parameter-dependent inviscid normal constraints presents distinct differential mechanics challenges and is deferred to future work.
\end{remark}

\begin{remark}[Target of data compression] 
	It is worth noting that once the HDM of interest is exercised to compute the snapshots, the reference state $\mathbf{x}_{\mathrm{ref}}$ is set, and the shift vector $\bar{\mathbf{x}}(\boldsymbol{\mu})$ is constructed, data compression is not applied directly to the raw solution snapshots $\mathbf{x}$. Instead, data compression is performed on the shifted snapshot states, namely $\mathbf{x} - \mathbf{x}_{\mathrm{ref}} - \bar{\mathbf{x}}(\boldsymbol{\mu})$. This ensures that the extracted right ROB vectors satisfy homogeneous boundary conditions, thereby preserving exact satisfaction of parameter-dependent boundary conditions in the reconstructed state regardless of the reduced representation.
\end{remark}

\section{Applications}
\label{sec:applications}

To evaluate the theoretical insights and mechanistic conclusions established in the preceding sections, we assess the performance of the proposed unified state-shifting framework across a series of parametric numerical benchmarks. The primary objective of this section is to illustrate the distinct operational signatures of Type~1 (intrinsic) and Type~3 (numerical) snapping, while providing empirical validation for the dual-action capabilities of our framework: guaranteeing exact \textit{a priori} boundary condition enforcement and restoring continuous parametric sensitivity without resorting to expensive online penalty terms or \textit{a posteriori} constraint schemes.

Regarding Type~1 snapping, attention is focused on the instructive edge case arising from Proposition~\ref{pro:p4}, where the projected forcing constraint $\mathbf{W}^T \Delta \mathbf{b} = \mathbf{0}$ acts as a necessary and sufficient condition for generalized coordinate stagnation. This intrinsic mechanism and its algebraic shift-vector mitigation are demonstrated using a linear static structural analysis benchmark.

The numerical investigations targeting Type~3 snapping center on two aerodynamic fluid flow applications designed to balance baseline accessibility with industrial complexity:
\begin{enumerate} 
	\item An academic-scale, two-dimensional parametric NACA airfoil configuration, serving as a transparent and easily reproducible testbed for evaluating fundamental snapping mechanisms and exact \textit{a priori} boundary transport across diverse parameter combinations.  
	\item An industrial-grade, three-dimensional aircraft configuration based on the publicly available NASA Common Research Model (CRM)~\cite{vassberg2008development}, demonstrating the scalability, robustness, and practical efficacy of the proposed framework within large-scale computational pipelines.
\end{enumerate}

To ensure a rigorous and direct comparison across both aerodynamic benchmarks, the underlying PROMs are constructed using an un-hyperreduced, affine approximation architecture solved via the LSPG projection method~\cite{carlberg2011efficient}. For both the NACA airfoil and NASA CRM configurations, online performance is systematically evaluated across three representative formulation variants: (i) non-descriptor without a shift, (ii) descriptor without a shift, and (iii) non-descriptor augmented with the spatially-varying Laplace-governed shift vector.

In both fluid applications, the high-fidelity snapshot solutions are obtained by solving the full, steady compressible nonlinear Navier-Stokes equations, coupled with the Spalart-Allmaras one-equation turbulence model. Spatial discretization relies on a second-order, MUSCL-type upwind cell reconstruction scheme for numerical flux evaluation. Parameter spaces are sampled using a structured Sobol sequence to generate the offline training sets. Furthermore, because these benchmarks operate exclusively in the steady-state regime, the POD bases are extracted without energy truncation, fixing the basis dimension to match the total number of collected snapshot states and ensuring full snapshot energy retention.

All high-fidelity simulations and PMOR workflows evaluated in this section are implemented within the compressible finite volume flow solver AERO-F, whose computational algorithms and academic implementation are outlined in~\cite{farhat2003application} and available at~\cite{frg_aero_bitbucket}. All computations are executed in double-precision arithmetic, with flow simulations formulated in non-dimensional variables. Because AERO-F operates natively as a three-dimensional spatial solver, the academic airfoil benchmark is simulated in a quasi-two-dimensional configuration by extruding the planar geometry into a single element layer with symmetric sidewall boundary conditions.

\subsection{Linear static analysis of a structural system}
\label{subsec:structuralsystem}

To empirically validate the algebraic mitigation of Type~1 (intrinsic) snapping derived in Section~\ref{subsec:mitigation_linear}, we consider a parametric linear static structural system governing the transverse deflection $w(x;\boldsymbol{\mu})$ of a straight cantilevered beam aligned with the $x$-axis. The beam is subject to a parameter-dependent distributed transverse load $q(x;\boldsymbol{\mu})$, where $\boldsymbol{\mu} \in \mathcal{D}$ controls the spatial variation of the load. The beam possesses a total length $L$, constant flexural rigidity $EI$, and uniform cross-sectional properties. For the numerical example, $L = 1$ and $EI = 1$.

The physical domain is discretized into $N_e = 1\,000$ Euler--Bernoulli beam elements with two degrees of freedom (DoFs) per node -- one vertical translational and one rotational -- yielding a high-dimensional system size of $N = 2 N_e = 2\,000$ DoFs. The governing physical system and its discretization are schematically illustrated in Figure~\ref{fig:cantilever_beam_setup}. This spatial 
discretization yields the high-dimensional linear algebraic system
\begin{equation*}
	\mathbf{A} \mathbf{w}(\boldsymbol{\mu}) = \mathbf{b}(\boldsymbol{\mu})
\end{equation*}
where $\mathbf{A} \in \mathbb{R}^{N \times N}$ is the parameter-independent FE stiffness matrix, $\mathbf{w}(\boldsymbol{\mu}) \in \mathbb{R}^N$ is the state vector of nodal vertical displacements and rotations, and $\mathbf{b}(\boldsymbol{\mu}) \in \mathbb{R}^N$ is the parameter-dependent load vector resulting from the discretization of the applied load distribution $q(x;\boldsymbol{\mu})$. For simplicity, the parameter vector is defined directly as the nodal load intensity values, $\boldsymbol{\mu} = [q_0 \quad q_1 \quad q_2 \quad \dots \quad q_{N_e}]^T$, assuming a piecewise linear load distribution between adjacent nodes.

\begin{figure}[pos=h!]
    \centering
    \includegraphics[width=0.60\linewidth]{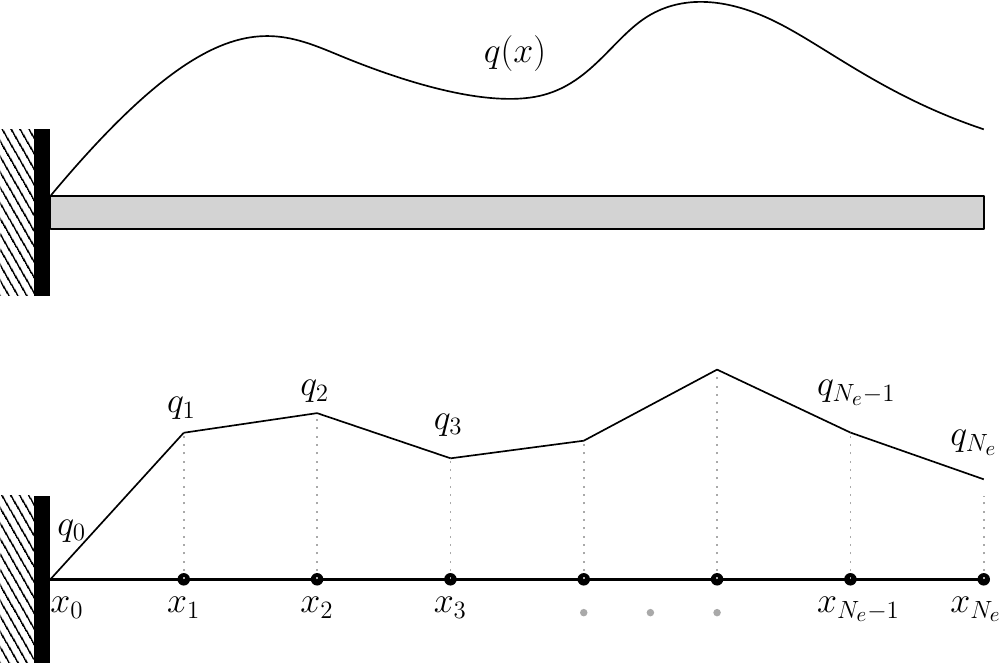}
    \caption{Schematic of a cantilevered beam subject to a parameter-dependent distributed load $q(x)$ (top), and its corresponding 
	FE discretization into $N_e$ Euler--Bernoulli beam elements (bottom).}
    \label{fig:cantilever_beam_setup}
\end{figure}

Offline data generation proceeds by collecting $n_s = 10$ high-dimensional state snapshots $\mathbf{w}(\boldsymbol{\mu}_j)$ corresponding to Gaussian-shaped distributed loads centered at randomized spatial positions across the beam span, as depicted in Figure~\ref{fig:beam_loads}. Performing an untruncated POD on these snapshots yields a right ROB $\mathbf{V} \in \mathbb{R}^{N \times n}$ of dimension $n = 10$. Employing a standard Galerkin projection ($\mathbf{W} = \mathbf{V}$) yields the baseline linear PROM:
\begin{equation*}
	(\mathbf{V}^T \mathbf{A} \mathbf{V}) \mathbf{y}(\boldsymbol{\mu}) = \mathbf{V}^T \mathbf{b}(\boldsymbol{\mu})
\end{equation*}
where $\mathbf{y}(\boldsymbol{\mu}) \in \mathbb{R}^n$ represents the online vector of generalized coordinates.

\begin{figure}[pos=h!]
    \centering
    \includegraphics[width=0.60\linewidth]{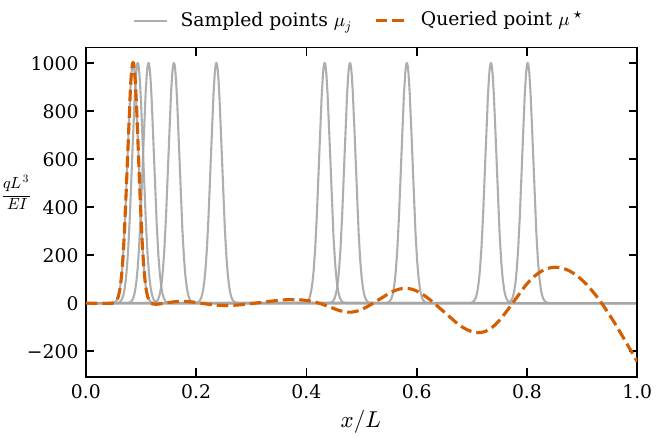}
    \caption{Normalized spatial load distributions ${q L^3}/{EI}$ on the cantilevered beam: offline training configurations 
	$\{\boldsymbol{\mu}_j\}_{j=1}^{10}$ used to construct the right ROB $\mathbf{V}$ (solid gray lines), and the un-sampled 
	online parameter query $\boldsymbol{\mu}^\star$ (dashed orange line).}
    \label{fig:beam_loads}
\end{figure}

To test the baseline PROM, the online parameter query $\boldsymbol{\mu}^\star$ is constructed such that its multi-peak load profile $q(x;\boldsymbol{\mu}^\star)$ (Figure~\ref{fig:beam_loads}) yields a forcing vector $\mathbf{b}(\boldsymbol{\mu}^\star)$ that intentionally satisfies the exact intrinsic snapping condition derived in Proposition~\ref{pro:p4} for Galerkin projections:
\begin{equation*}
	\mathbf{V}^T \big( \mathbf{b}(\boldsymbol{\mu}^\star) - \mathbf{b}(\boldsymbol{\mu}_1) \big) = \mathbf{0}
\end{equation*}
where $\boldsymbol{\mu}_1$ corresponds to a specific offline training parameter point.

As demonstrated in Figure~\ref{fig:beam_deflection}, the reconstructed deflection profile from the unshifted baseline PROM evaluated at $\boldsymbol{\mu}^\star$ fails to track the true HDM solution at $\boldsymbol{\mu}^\star$. Instead, the unshifted PROM solution locks identically onto the offline HDM snapshot computed at $\boldsymbol{\mu}_1$, exhibiting the classical snapshot latching signature of Type~1 intrinsic snapping. Conversely, when the online system is augmented using the parameter-dependent algebraic shift vector (Equation~\eqref{eq:forcing_vector}), the test-space orthogonality condition is successfully broken. The shifted PROM completely eliminates snapshot latching and achieves excellent agreement with the high-fidelity HDM solution across the entire length of the beam.

\begin{figure}[pos=h!]
    \centering
    \includegraphics[width=0.60\linewidth]{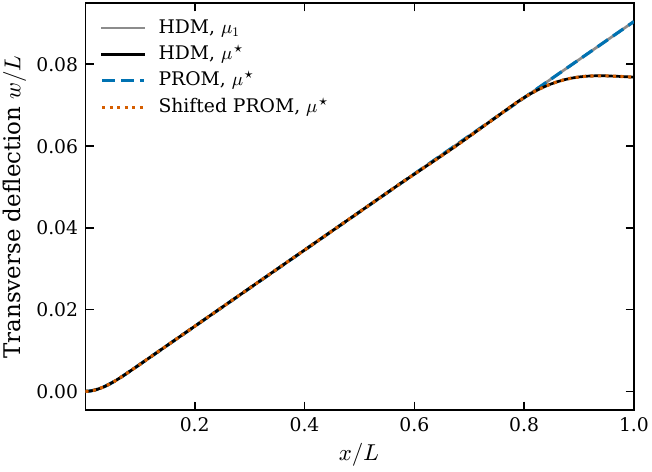}
    \caption{Normalized transverse beam deflection $w/L$ along the normalized span $x/L$: comparison of the unshifted baseline PROM 
	solution and the state-shifted PROM solution evaluated at the query parameter $\boldsymbol{\mu}^\star$ against the true HDM 
	solution at $\boldsymbol{\mu}^\star$ and the training snapshot at $\boldsymbol{\mu}_1$.}
    \label{fig:beam_deflection}
\end{figure}

\subsection{Steady viscous flow around a NACA airfoil under multi-boundary parameterizations}
\label{subsec:naca}

To systematically investigate the mechanisms governing Type~3 numerical snapping across distinct operational regimes, we consider steady viscous flow around a parametric NACA airfoil configuration with chord length $c=1$. The computational domain is discretized using an O-grid mesh extending to an outer boundary radius of approximately $83$~chord lengths. The mesh consists of $812\,098$~nodes and $4\,052\,396$~tetrahedral elements, with near-wall grid refinement engineered to enforce $y^+ \approx 1$ across the baseline freestream Mach number range $M_\infty \in [0.4, 0.6]$. Far-field boundary conditions are enforced using Steger--Warming flux-vector splitting, while solid surfaces are subjected to viscous no-slip, adiabatic wall conditions.

To evaluate the interaction between equation scaling and boundary parameterizations across distinct sources of parametric variation, online performance is assessed across three setups: (i)~parametric far-field conditions only, (ii)~parametric wall geometry only, and (iii)~combined parametric far-field and wall geometry. In each test case, three LSPG PROM configurations are compared: (a)~unshifted non-descriptor, (b)~unshifted descriptor, and (c)~Laplace-shifted non-descriptor.

\subsubsection{Parametric far-field conditions only}
\label{subsubsec:naca_farfield}

We first consider a two-dimensional parameter domain $\boldsymbol{\mu} = [M_\infty \quad \alpha]^T \in \mathcal{D}_{\text{ff}}$, governing variations in freestream Mach number $M_\infty \in [0.4, 0.6]$ and angle of attack $\alpha \in [-5^\circ, 5^\circ]$. The airfoil geometry remains fixed to the baseline symmetric NACA0012 profile. Sobol sampling of $\mathcal{D}_\text{ff}$ at 40 parameter points (see Figure~\ref{fig:naca-ff-sobol}), snapshot collection at these points, and snapshot compression yield a right ROB of dimension $n = 40$.

\begin{figure}[pos=h!] 
	\centering 
	\includegraphics[width=0.60\linewidth]{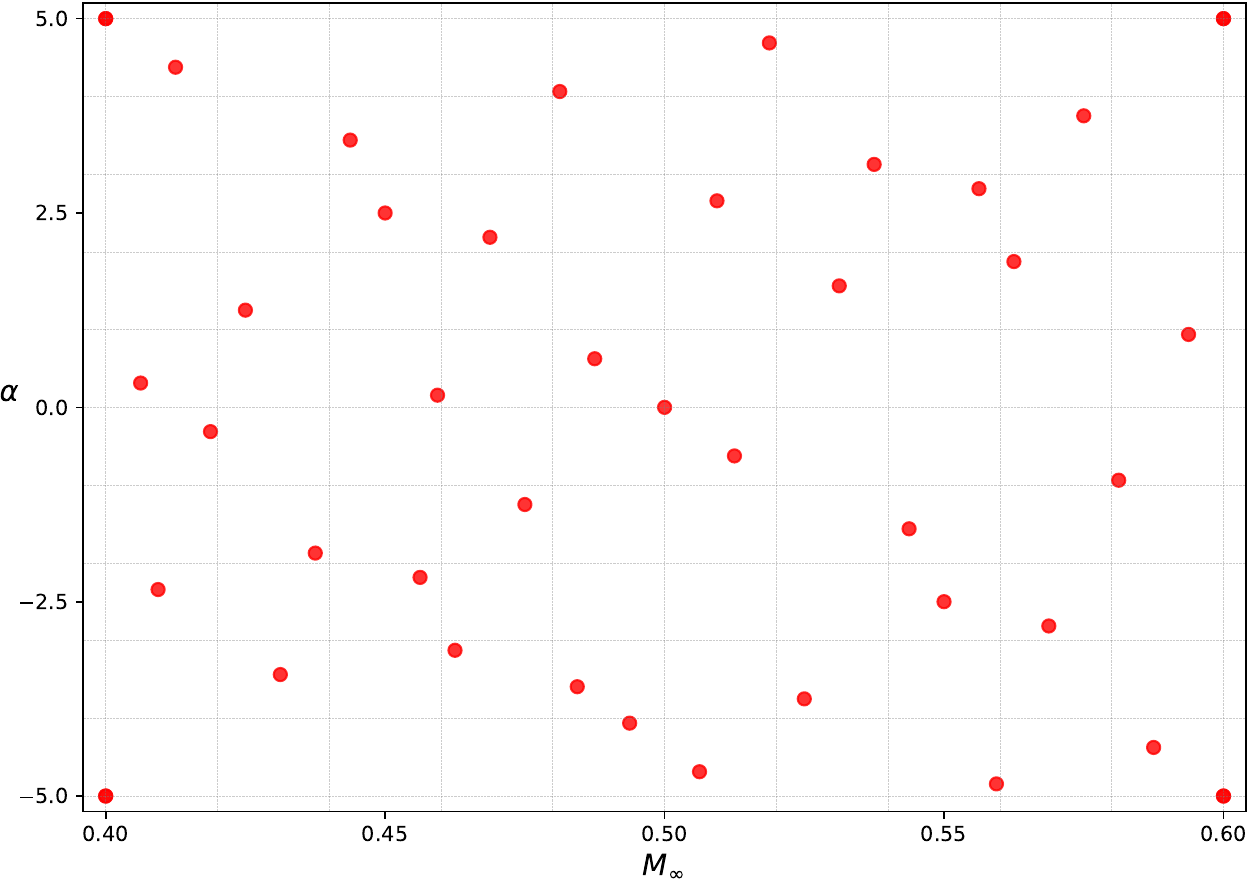} 
	\caption{40 HDM training points generated by a Sobol sequence in $\mathcal{D}_\text{ff}$.} 
	\label{fig:naca-ff-sobol}
\end{figure}

To highlight the spatial scaling mechanics detailed in Section~\ref{sec:numerical_snapping_mechanism}, we evaluate the online 
performance of the three model configurations: the unshifted non-descriptor PROM, the unshifted descriptor PROM, and the 
Laplace-shifted non-descriptor PROM. For each model configuration under consideration, we construct an LSPG PROM of dimension $n=40$ 
to predict the flow past the NACA0012 profile at the unseen parameter point $\boldsymbol{\mu}^\star = [0.54 \quad 5^\circ]^T$. 
Figure~\ref{fig:naca_farfield_results} compares the pressure coefficient profiles predicted at $\boldsymbol{\mu}^\star$ by these PROMs
against the corresponding HDM predictions at both $\boldsymbol{\mu}^\star$ and the training point $\boldsymbol{\mu}^\dagger = [0.5375 
\quad 3.125^\circ]^T$.
\begin{itemize} 
	\item The non-descriptor PROM without a shift snaps onto the training snapshot computed at $\boldsymbol{\mu}^\dagger$ 
		(Figure~\ref{fig:naca_farfield_results}(a)).  
	\item The descriptor PROM without a shift reproduces accurately the pressure coefficient profile predicted by the HDM 
		(Figure~\ref{fig:naca_farfield_results}(b)).  
	\item The non-descriptor PROM with a Laplace shift also reproduces accurately the pressure coefficient profile predicted by 
		the HDM (Figure~\ref{fig:naca_farfield_results}(c)).
\end{itemize}

\begin{figure}[pos=h!] 
	\centering 
	\includegraphics[width=1\textwidth]{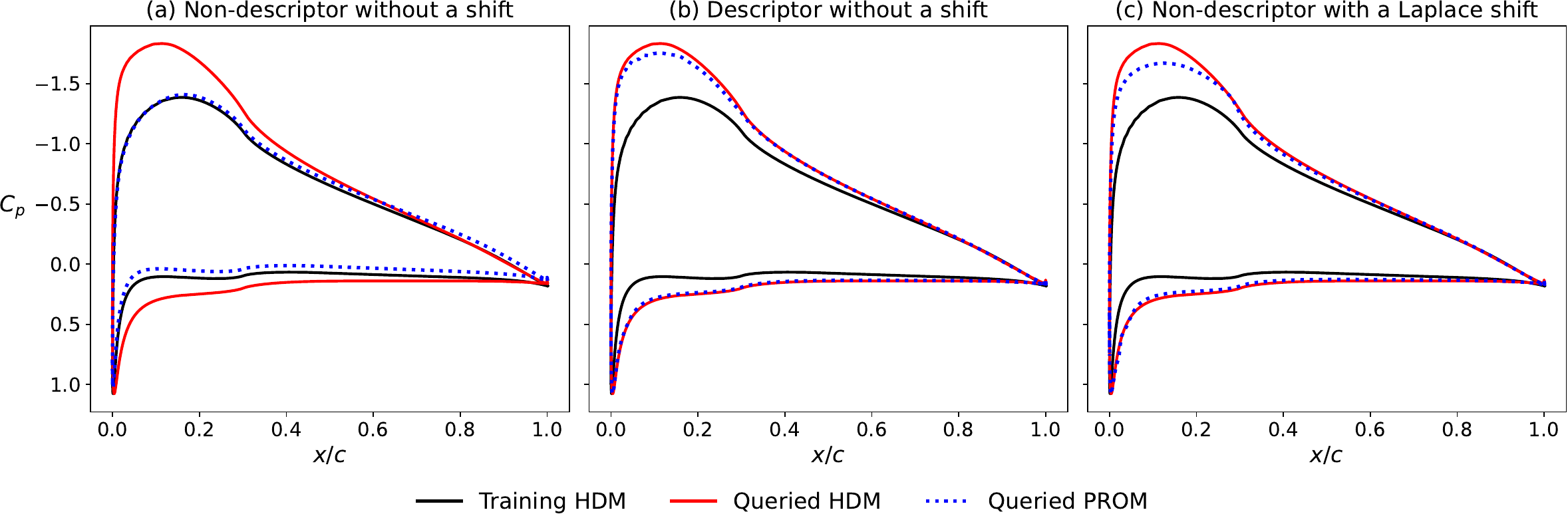} 
	\caption{Training HDM at $(M_\infty, \alpha) = (0.5375, 3.125^\circ)$, PROM queried at $(M_\infty, \alpha) = (0.54, 
	5.0^\circ)$.
	Online performance comparison for the parametric far-field case: (a) non-descriptor without a shift, (b) descriptor without a shift, and (c) non-descriptor with a Laplace-governed shift.} 
	\label{fig:naca_farfield_results}
\end{figure}

Taken together, these results demonstrate that both the unshifted descriptor formulation and the Laplace-shifted non-descriptor 
formulation successfully mitigate the rigid snapshot latching characteristic of Type~3 numerical snapping in this far-field 
parameterization. However, unlike the descriptor formulation -- which inherently compromises near-field error minimization in highly 
refined boundary layer cells -- the Laplace-shifted non-descriptor framework enforces parameter-dependent far-field boundary conditions
\textit{a priori} and restores continuous parametric regression while preserving the exact numerical scaling properties required for 
near-field fidelity.

\subsubsection{Parametric wall geometry only}
\label{subsubsec:naca_geometry}

Next, we isolate parameterizations acting exclusively on the solid wall boundary. The freestream operating conditions are held 
constant at $(M_\infty, \alpha) = (0.6, 3^\circ)$, while the airfoil geometry varies over a three-dimensional parameter domain 
$\boldsymbol{\mu} = [m \quad p \quad t]^T \in \mathcal{D}_{\text{geom}}$. In this setup, $m$ represents the maximum camber, $p$ 
denotes its location, and $t$ dictates the maximum thickness. A 10-point Sobol sampling of $\mathcal{D}_{\text{geom}}$ 
(Figure~\ref{fig:naca-geom-sobol}) suffices to construct a sufficiently accurate right ROB.

\begin{figure}[pos=h!] 
	\centering 
	\includegraphics[width=0.60\linewidth]{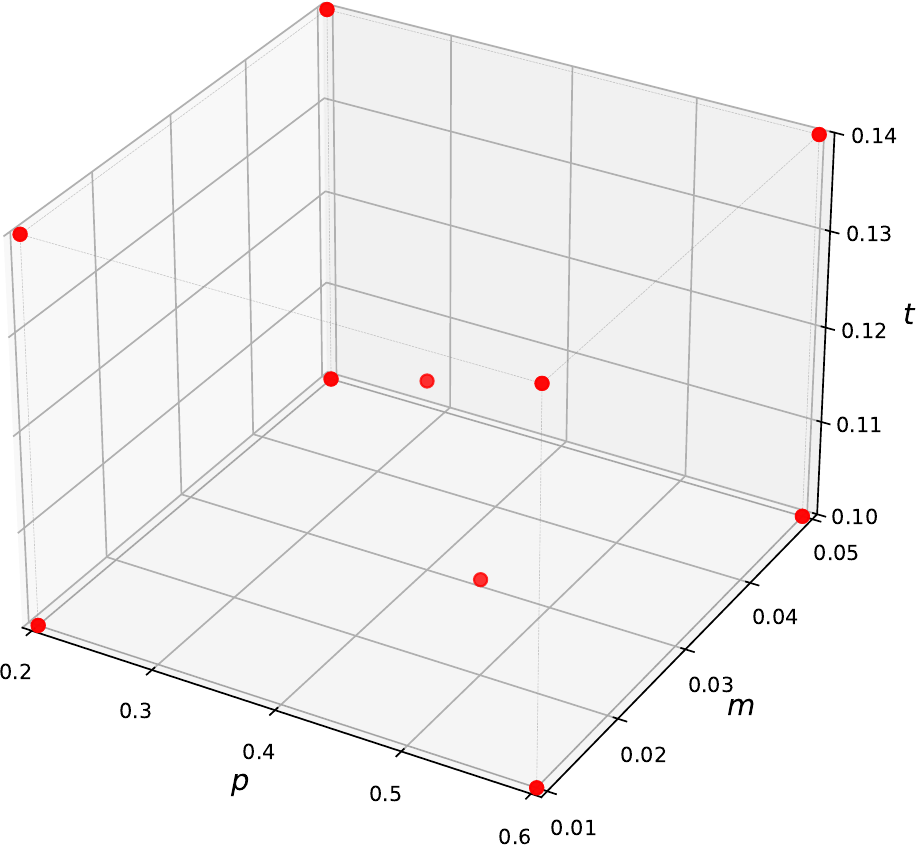} 
	\caption{10 HDM training points generated by a Sobol sequence in $\mathcal{D}_\text{geom}$.} 
	\label{fig:naca-geom-sobol}
\end{figure}

To evaluate the inverse spatial scaling dynamics under wall geometry variations, we assess the same three PROM configurations: 
unshifted descriptor, unshifted non-descriptor, and Laplace-shifted non-descriptor. For each model configuration, we construct an 
LSPG PROM of dimension $n=10$ to predict the flow past the airfoil geometry defined by the unseen parameter point 
$\boldsymbol{\mu}^\star = [0.03 \quad 0.6 \quad 0.10]^T$. Figure~\ref{fig:naca_geometry_results} compares the pressure coefficient 
profiles predicted at $\boldsymbol{\mu}^\star$ by these PROMs against the corresponding HDM predictions at both 
$\boldsymbol{\mu}^\star$ and the training point $\boldsymbol{\mu}^\dagger = [0.02 \quad 0.5 \quad 0.11]^T$:
\begin{itemize} 
	\item The non-descriptor PROM without a shift accurately reproduces the pressure coefficient profile predicted by the HDM 
		(Figure~\ref{fig:naca_geometry_results}(a)).  
	\item The descriptor PROM without a shift snaps onto the training snapshot computed at $\boldsymbol{\mu}^\dagger$ 
		(Figure~\ref{fig:naca_geometry_results}(b)).  
	\item The non-descriptor PROM with a Laplace shift reproduces the pressure coefficient profile predicted by the HDM even more 
		accurately than the unshifted variant (Figure~\ref{fig:naca_geometry_results}(c)).
\end{itemize}

\begin{figure}[pos=h!] 
	\centering 
	\includegraphics[width=1\textwidth]{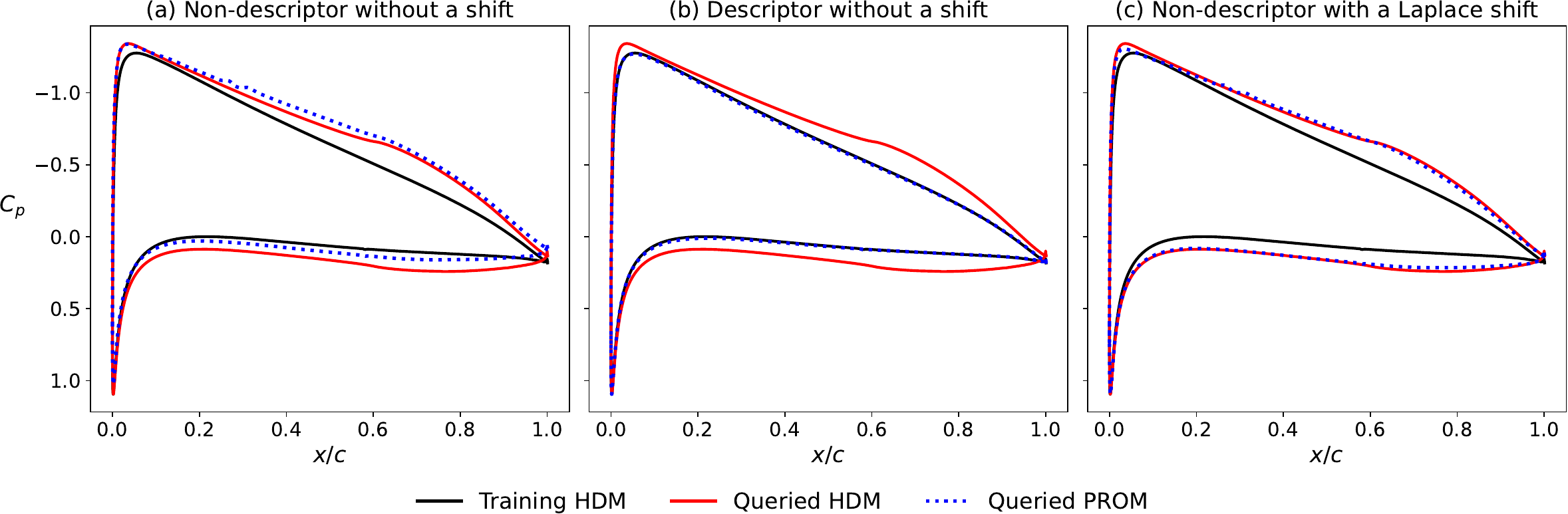} 
	\caption{Training HDM at $(m, p, t) = (0.02, 0.5, 0.11)$, PROM queried at $(m, p, t) = (0.03, 0.6, 0.10)$.
	Online performance comparison for the parametric wall geometry case: (a) non-descriptor without a shift, (b) descriptor 
	without a shift, and (c) non-descriptor with a Laplace-governed shift.}
	\label{fig:naca_geometry_results}
\end{figure}

In summary, these observations confirm that both the unshifted non-descriptor model and its Laplace-augmented counterpart robustly 
suppress the rigid snapshot latching characteristic of Type~3 numerical snapping under wall geometry variations. By contrast, the 
descriptor formulation succumbs to snapping in this setting because its unweighted residual allows the massive cell volumes of the 
far field to mask near-field surface updates. Importantly, embedding the Laplace-governed shift vector into the non-descriptor 
baseline introduces no adverse artifacts or modeling bias, proving that the proposed state-shifting strategy operates benignly when 
viscous wall boundary conditions are homogeneous.

\subsubsection{Combined parametric far-field and wall geometry}
\label{subsubsec:naca_combined}

Finally, we evaluate the coupled multi-boundary scenario where parameter variations simultaneously govern far-field conditions and 
surface geometry updates. We define a three-dimensional parameter domain $\boldsymbol{\mu} = [M_\infty \quad \alpha \quad p]^T \in 
\mathcal{D}_{\text{comb}}$, varying freestream Mach number $M_\infty \in [0.4, 0.6]$, angle of attack $\alpha \in [-5^\circ, 5^\circ]$,
and the location of maximum camber $p \in [0.3, 0.6]$, with fixed maximum camber $m = 0.03$ and thickness $t = 0.12$. Offline 
snapshot generation yields an untruncated reduced basis dimension of $n = 110$ (see Figure~\ref{fig:naca-combined-sobol}).

\begin{figure}[pos=h!] 
	\centering 
	\includegraphics[width=0.6\textwidth]{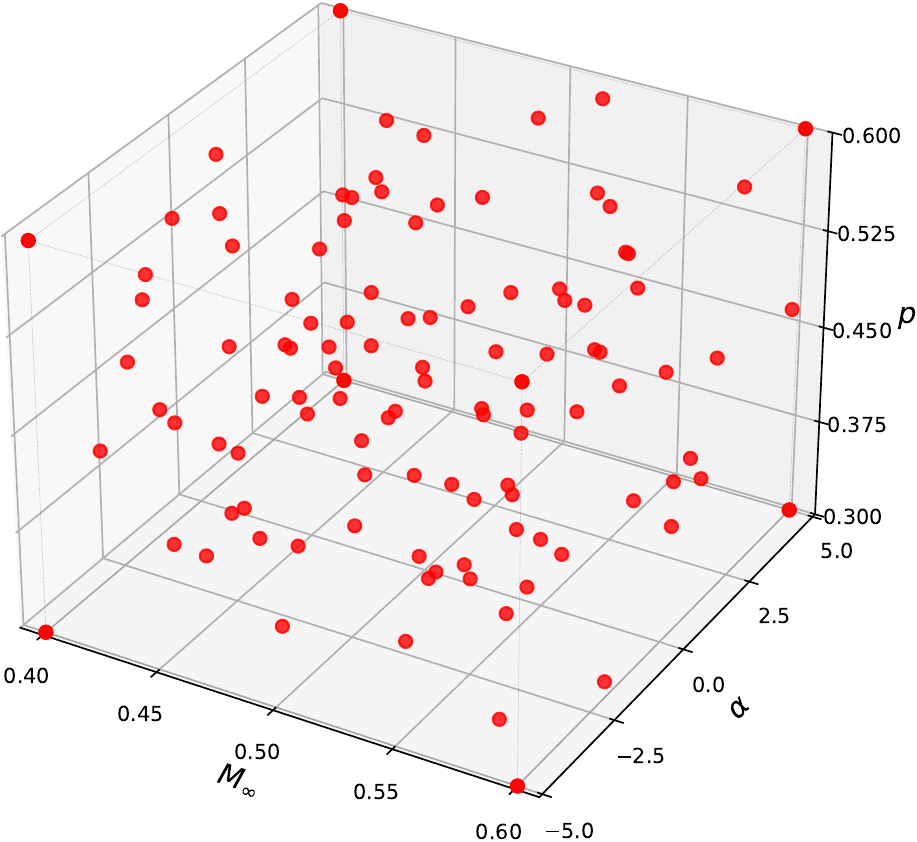} 
	\caption{110 HDM training points generated by a Sobol sequence in $\mathcal{D}_\text{comb}$.} 
	\label{fig:naca-combined-sobol}
\end{figure}

To evaluate the coupled multi-boundary scenario, online predictions across the three PROM variants are assessed to test their 
susceptibility to historical formulation deadlock. Specifically, each dimension-$n=110$ LSPG PROM is queried at the unseen parameter 
point $\boldsymbol{\mu}^\star = [0.6 \quad -2.5^\circ \quad 0.3]^T$ to predict the flow field across simultaneous geometric and 
freestream variations. Figure~\ref{fig:naca_combined_results} compares the pressure coefficient profiles predicted at 
$\boldsymbol{\mu}^\star$ by these PROMs against the corresponding HDM predictions at both $\boldsymbol{\mu}^\star$ and the training 
point $\boldsymbol{\mu}^\dagger = [0.596875 \quad -1.71875^\circ \quad 0.304688]^T$. Without a shift, the non-descriptor PROM snaps 
onto the training snapshot at $\boldsymbol{\mu}^\dagger$ (Figure~\ref{fig:naca_combined_results}(a)). While the unshifted descriptor 
model accurately captures the high-fidelity target profile (Figure~\ref{fig:naca_combined_results}(b)), the Laplace-shifted 
non-descriptor PROM reproduces the HDM pressure distribution with even greater precision (Figure~\ref{fig:naca_combined_results}(c)).

\begin{figure}[pos=h!] 
	\centering 
	\includegraphics[width=1.0\textwidth]{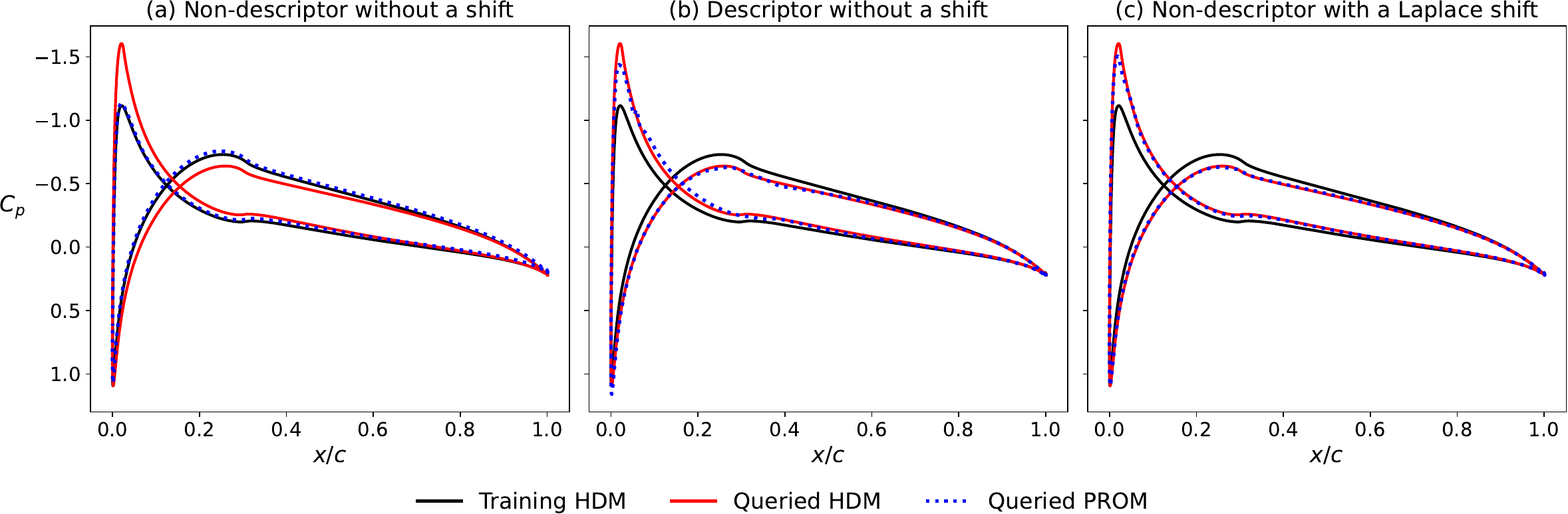} 
	\caption{Training HDM at $(M_\infty, \alpha, p) = (0.596875, -1.71875^\circ, 0.304688)$, PROM queried at $(M_\infty, \alpha, p)
	= (0.6, -2.5^\circ, 0.3)$. 
	Online performance comparison for the combined far-field and geometry parameterization: (a) non-descriptor without a shift, 
	(b) descriptor without a shift, and (c) non-descriptor with a Laplace-governed shift.} 
	\label{fig:naca_combined_results} 
\end{figure}

Crucially, this combined benchmark demonstrates the overarching advantage of our unified state-shifting framework. By universally 
adhering to the non-descriptor formulation and absorbing parameter-dependent boundary updates \textit{a priori} via the 
field-theoretic Laplace extension field $\bar{\mathbf{x}}(\boldsymbol{\mu})$, the framework resolves the historical structural 
deadlock without requiring \textit{a posteriori} constraint terms or compromising near-field simulation accuracy.

\subsection{Steady viscous flow around a NASA CRM under parametric far-field conditions}
\label{subsec:crm}

To evaluate the scalability and practical efficacy of the proposed state-shifting framework within a large-scale computational 
pipeline, we consider steady compressible viscous flow around the three-dimensional NASA CRM described 
in~\cite{vassberg2008development}. The outer boundary of the spherical computational fluid domain is placed at a radius of 
approximately $10\,\ell$, where $\ell$ denotes the reference length measured from the midpoint of the CRM's longitudinal extent to 
its tail. The physical domain is discretized into $9\,764\,468$ nodes and $57\,956\,418$ elements. To fully resolve boundary layer 
effects, the mesh is constructed to have an average $y^+\approx 0.77$ across all sampled parameter points.

The two-dimensional parameter domain is defined as $\boldsymbol{\mu} = [M_\infty \quad \alpha]^T \in \mathcal{D}_\text{ff}$, where the
freestream Mach number ranges over $M_\infty \in [0.75, 0.85]$ and the angle of attack varies within $\alpha \in [0^\circ, 3.5^\circ]$.
Offline sampling of $\mathcal{D}_\text{ff}$ is performed using a 20-point Sobol sequence to ensure uniform coverage across the 
parameter space. High-fidelity snapshots are obtained at each sample point by solving the steady compressible Navier--Stokes 
equations coupled with the Spalart--Allmaras turbulence model. The right ROB $\mathbf{V}$ is constructed via POD without energy 
truncation, retaining $n = n_s = 20$ snapshot states.

Under these parametric far-field conditions, online predictive performance is evaluated across the three representative formulation 
variants:
\begin{itemize}
    \item \textit{Unshifted non-descriptor formulation:} Subject to the $1/\Omega_i$ spatial scaling imbalance detailed in 
	  Section~\ref{sec:nonlinear}, wherein the unperturbed near-field residual dominates the global $l_2$-norm, driving rigid 
	  snapshot latching instead of smooth parametric tracking.
    \item \textit{Unshifted descriptor formulation:} Retains the unweighted residual to track freestream variations smoothly, 
	    though it may degrade near-field boundary layer fidelity targets under standard solver tolerances.
    \item \textit{Laplace-shifted non-descriptor formulation:} Embeds the field-theoretic shift vector 
	  $\overline{\mathbf{x}}(\boldsymbol{\mu})$ governed 
	  by~\eqref{eq:laplace_shift_framework}--\eqref{eq:shift_vector_construction} directly into the non-descriptor trial space. 
	  This enforces parameter-dependent far-field boundary conditions \textit{a priori}, which in principle eliminates Type~3 
	  numerical snapping while preserving near-field simulation accuracy.
\end{itemize}

\begin{figure}[pos=h!]
\centering
\includegraphics[width=0.6\linewidth]{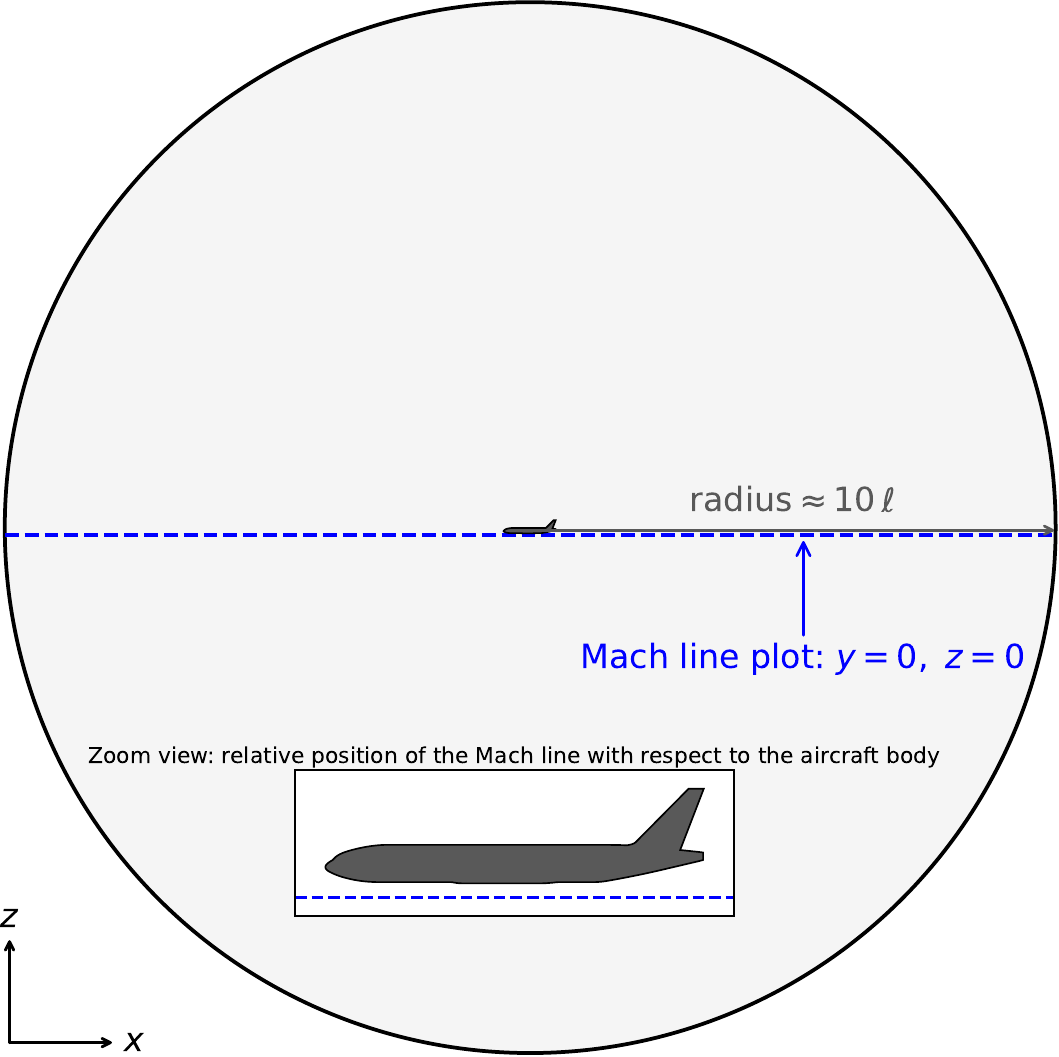}
\caption{Symmetry plane of the spherical computational fluid domain for the NASA CRM. The Mach distribution shown in 
	Figure~\ref{fig:crm_mach_dist} is plotted along the blue dashed line.}
\label{fig:crm_domain}
\end{figure}

Figure~\ref{fig:crm_domain} illustrates the symmetry plane of the computational domain and identifies the line along which the
far-field Mach-number distribution is evaluated. The 20 offline training points generated by the Sobol sequence are shown in
Figure~\ref{fig:crm_sobol}. For the online assessment, the three PROM variants are queried at
$\boldsymbol{\mu}^\star = [0.83 \quad 1.0^\circ]^T$, an unsampled point within $\mathcal{D}_{\mathrm{ff}}$.


\begin{figure}[pos=h!]
\centering
\includegraphics[width=0.6\linewidth]{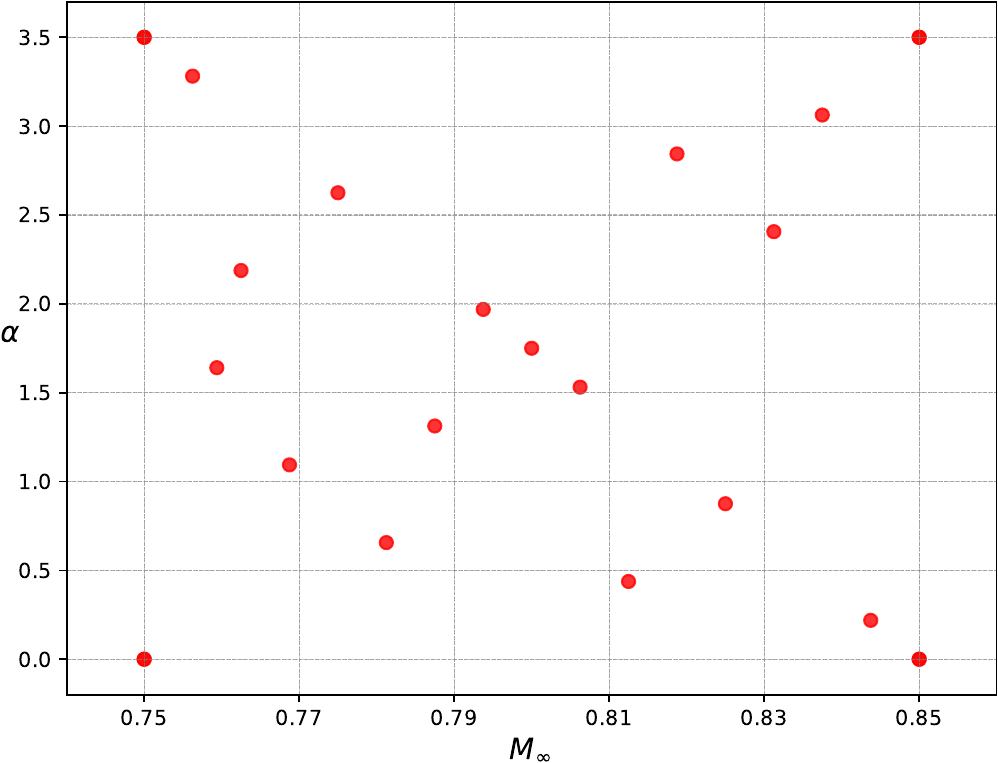}
\caption{20 HDM training points generated by the Sobol sequence in $\mathcal{D}_\text{ff}$.} 
	\label{fig:crm_sobol}
\end{figure}

Unlike the NACA airfoil cases, no Type~3 snapping is observed for any of the three PROM configurations in this NASA CRM benchmark. 
This observation is not inconsistent with the theoretical developments presented above. The analysis in 
Section~\ref{sec:numerical_snapping_mechanism} identifies structural and numerical conditions that promote snapping, but does
not establish that their presence is sufficient to guarantee its occurrence in every configuration. In contrast, the mitigation
strategy based on the state-dependent shift was constructed specifically to eliminate the identified mechanism by enforcing the
parameter-dependent boundary conditions \textit{a priori}. Thus, the absence of snapping in this case does not weaken the proposed
mechanism; rather, it illustrates that the sufficient conditions for snapping need not be activated for every parameter query or
every problem instance.

Although no snapshot latching is observed, the expected effect of the residual scaling on boundary-condition satisfaction is
clearly visible. Figure~\ref{fig:crm_mach_dist} compares the Mach-number distributions obtained along the line shown in
Figure~\ref{fig:crm_domain}. The unshifted non-descriptor PROM fails to recover the prescribed far-field Mach number
$M_\infty=0.83$, as shown in Figure~\ref{fig:crm_mach_dist}(a). This behavior is consistent with the analysis of
Section~\ref{sec:numerical_snapping_mechanism}: the inverse cell-volume scaling strongly emphasizes the finely discretized
near-field region, while the far-field condition receives comparatively little weight in the residual minimization. In contrast,
both the unshifted descriptor PROM and the Laplace-shifted non-descriptor PROM satisfy the prescribed far-field Mach-number
condition, as shown in Figures~\ref{fig:crm_mach_dist}(b) and~\ref{fig:crm_mach_dist}(c), respectively.

\begin{figure}[pos=h!]
\centering
\includegraphics[width=1\linewidth]{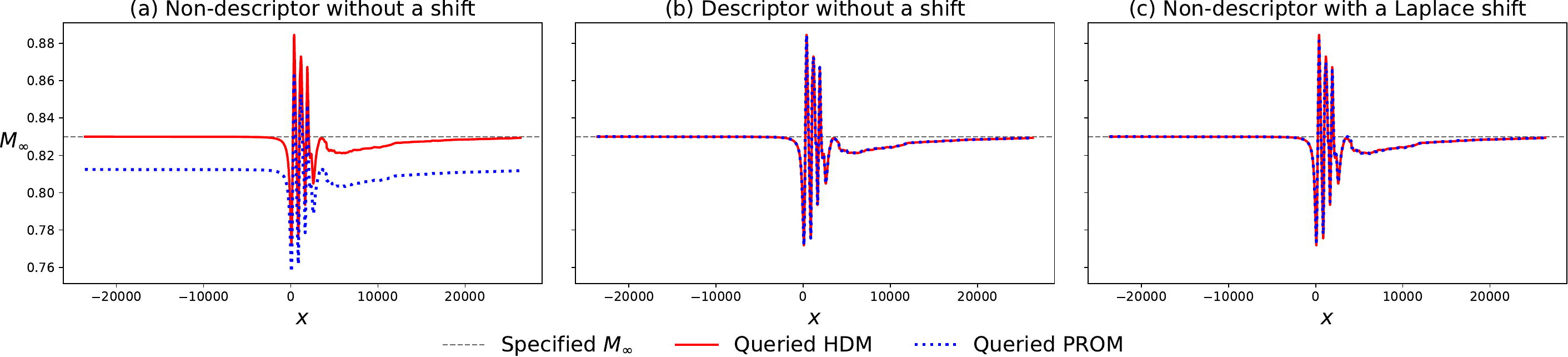}
\caption{Parameter point $(M_\infty, \alpha)=(0.83, 1.0^\circ)$ queried. \
Online performance comparison for the far-field parameterization: (a) non-descriptor without a shift, which violates the
far-field $M_\infty$ condition, (b) descriptor without a shift, and (c) non-descriptor with a Laplace-governed shift, both of
which respect the far-field condition.}
\label{fig:crm_mach_dist}
\end{figure}

\begin{figure}[pos=h!]
\centering
\includegraphics[width=1\linewidth]{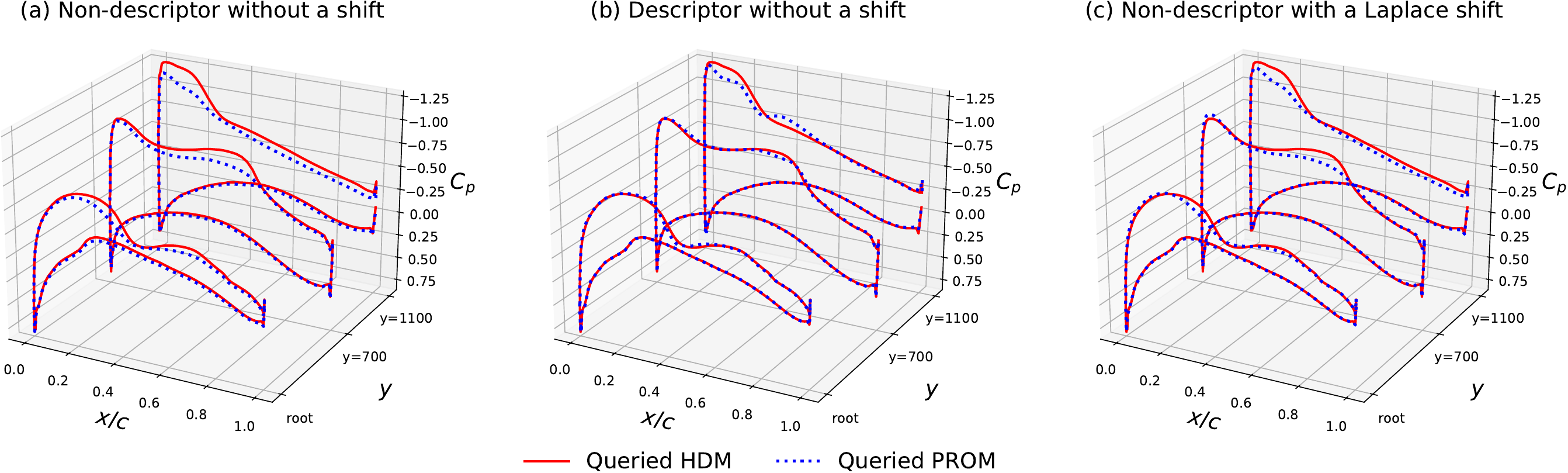}
\caption{PROM queried at $(M_\infty, \alpha)=(0.83, 1.0^\circ)$. \
Online performance comparison for the far-field parameterization: (a) non-descriptor without a shift, (b) descriptor
without a shift, and (c) non-descriptor with a Laplace-governed shift.}
\label{fig:crm_cp}
\end{figure}

Figure~\ref{fig:crm_cp} shows the surface pressure coefficient distributions as functions of $x/c$, where $c$ denotes the local chord 
length, at three different wing sections. All three PROMs provide a reasonable approximation of the HDM solution at the queried 
parameter point. Nevertheless, the pressure coefficient predicted by the unshifted non-descriptor PROM is slightly less accurate than 
those obtained with the descriptor and Laplace-shifted non-descriptor formulations. This behavior is consistent with the observed 
far-field boundary-condition violation. The failure to enforce the prescribed freestream state introduces an additional source of 
error into the flow solution, which is reflected in the slightly larger discrepancy in the predicted $C_p$ distribution. The 
descriptor and Laplace-shifted non-descriptor formulations, both of which satisfy the far-field condition, yield closely matching 
pressure distributions and more accurately reproduce the HDM solution.

Collectively, the NASA CRM results provide a complementary validation of the proposed framework. In this large-scale
three-dimensional problem, the numerical conditions identified as conducive to Type~3 snapping do not lead to actual snapshot
latching. Nevertheless, the underlying residual-scaling imbalance remains observable through the violation of the
parameter-dependent far-field condition by the unshifted non-descriptor formulation. The Laplace-governed shift restores this
boundary condition \textit{a priori} while retaining the non-descriptor formulation and its favorable near-field weighting. 
These results therefore demonstrate that the role of the state shift extends beyond suppressing observed snapshot snapping: it also 
provides a systematic mechanism for enforcing parameter-dependent far-field behavior while preserving the favorable near-field 
weighting of the non-descriptor formulation.


\section{Conclusions}
\label{sec:conclusions}

This work defined the snapping phenomenon in projection-based model reduction and established a unified framework for understanding 
its origins, manifestations, and mitigation. Three distinct mechanisms were identified. Type~1 snapping is intrinsic to the projection
process and arises when parameter-dependent variations become invisible to the reduced test space. Type~2 snapping is extrinsic and 
results from modeling choices such as localized reduction or hyperreduction. Type~3 snapping is numerical and originates from discrete
residual scaling, spatial resolution, and related numerical effects. This classification distinguishes mechanisms that are 
fundamentally tied to projection from those introduced by the formulation or discretization of the underlying computational model.

For linear parametric problems, the analysis established precise conditions under which a projection-based reduced-order model can 
stagnate at a training state despite a changing high-dimensional solution. In particular, necessary and sufficient conditions were 
derived for stagnation in the case of a parameter-independent operator, revealing that a parameter variation can become invisible 
when its forcing variation lies in the null space of the reduced test space. A parameter-dependent state shift was then introduced to 
restore sensitivity to such variations. The resulting formulation admits an offline--online decomposition in which the 
high-dimensional shift need not be explicitly constructed during the online stage. Under the stated assumptions, the shift eliminates 
the identified stagnation mechanism except at isolated parameter values associated with the interpolation structure.

The same state-shifting concept was subsequently extended to nonlinear finite-volume computational fluid dynamics, where Type~3 
snapping is associated with the spatial scaling of the residual. The analysis showed that the descriptor and non-descriptor 
formulations weight parameter-dependent boundary variations differently because of their respective cell-volume scalings. This 
explains why descriptor formulations can better preserve far-field parameter variations, whereas non-descriptor formulations provide 
stronger weighting of finely discretized near-wall regions. Rather than switching between these formulations according to the active 
parameterization, a Laplace-governed state shift was introduced within the non-descriptor formulation. The shift propagates 
parameter-dependent boundary transformations into the computational domain and enforces the corresponding boundary conditions 
\textit{a priori}, while retaining the near-field weighting of the non-descriptor residual.

The numerical results support these theoretical developments. The linear beam problem illustrates the intrinsic snapping mechanism 
and its mitigation through state shifting. The NACA airfoil problems demonstrate the numerical snapping associated with descriptor 
and non-descriptor residual scaling and show that the Laplace-governed shift can suppress the resulting snapshot latching while 
preserving the desired near-field behavior. The large-scale NASA CRM calculation further demonstrates the scalability of the approach 
to a three-dimensional viscous flow problem with nearly $10$ million nodes and more than $57$ million elements. Although no Type~3 
snapshot snapping was observed in this case, the expected far-field boundary-condition violation of the unshifted non-descriptor 
formulation was clearly observed. The Laplace-shifted non-descriptor formulation restored the prescribed far-field condition and 
produced a pressure distribution closely matching that obtained with the descriptor formulation.

The results establish that snapping is not a single failure mode of projection-based model order reduction but a collection of 
phenomena with distinct mathematical and numerical origins. The proposed state-shifting framework provides a common mechanism for 
restoring parameter sensitivity and enforcing parameter-dependent boundary conditions without requiring a change in the underlying 
reduced-order formulation. The framework therefore offers a systematic basis for diagnosing and mitigating snapping across linear 
and nonlinear parametric problems, while clarifying the respective roles of projection, model formulation, and discretization in 
the behavior of reduced-order models.

\section*{Acknowledgments}
The US authors acknowledge partial support from the Air Force Office of Scientific Research under Grants FA9550-22-1-0004 and 
FA9550-24-1-0327, and from the Office of Naval Research under Grant N00014-23-1-2877.





\printcredits

\bibliographystyle{model1-num-names}

\bibliography{cas-refs}

@book{wanner1996solving,
title={Solving ordinary differential equations II},
author={Wanner, Gerhard and Hairer, Ernst},
volume={375},
year={1996},
publisher={Springer Berlin Heidelberg New York}
}

@incollection{amsallem2014stability,
title={On the stability of reduced-order linearized computational fluid dynamics models based on POD and Galerkin projection: descriptor vs non-descriptor forms},
author={Amsallem, David and Farhat, Charbel},
booktitle={Reduced order methods for modeling and computational reduction},
pages={215--233},
year={2014},
publisher={Springer}
}

@inproceedings{washabaugh2014enforcing,
title={Enforcing Boundary Conditions for Discrete Nonlinear Aerodynamic Reduced-Order Models},
author={Washabaugh, Kyle M and Farhat, Charbel},
booktitle={32nd AIAA Applied Aerodynamics Conference},
pages={3169},
year={2014}
}

@phdthesis{washabaugh2016faster,
title={Faster fidelity for better design: A scalable model order reduction framework for steady aerodynamic design applications},
author={Washabaugh, Kyle Michael},
year={2016},
school={Stanford University}
}

@article{carlberg2011efficient,
title={Efficient non-linear model reduction via a least-squares Petrov--Galerkin projection and compressive tensor approximations},
author={Carlberg, Kevin and Bou-Mosleh, Charbel and Farhat, Charbel},
journal={International Journal for numerical methods in engineering},
volume={86},
number={2},
pages={155--181},
year={2011},
publisher={Wiley Online Library}
}

@article{carlberg2013gnat,
title={The GNAT method for nonlinear model reduction: effective implementation and application to computational fluid dynamics and turbulent flows},
author={Carlberg, Kevin and Farhat, Charbel and Cortial, Julien and Amsallem, David},
journal={Journal of Computational Physics},
volume={242},
pages={623--647},
year={2013},
publisher={Elsevier}
}

@article{de2026nonlinear,
title={Nonlinear projection-based model order reduction with machine learning regression for closure error modeling in the latent space},
author={De Parga, S Ares and Tezaur, Radek and Hern{\'a}ndez, Carlos G and Farhat, Charbel},
journal={Computer Methods in Applied Mechanics and Engineering},
volume={448},
pages={118443},
year={2026},
publisher={Elsevier}
}

@article{chaturantabut2010nonlinear,
title={Nonlinear model reduction via discrete empirical interpolation},
author={Chaturantabut, Saifon and Sorensen, Danny C},
journal={SIAM Journal on Scientific Computing},
volume={32},
number={5},
pages={2737--2764},
year={2010},
publisher={SIAM}
}

@article{farhat2014dimensional,
title={Dimensional reduction of nonlinear finite element dynamic models with finite rotations and energy-based mesh sampling and weighting for computational efficiency},
author={Farhat, Charbel and Avery, Philip and Chapman, Todd and Cortial, Julien},
journal={International Journal for Numerical Methods in Engineering},
volume={98},
number={9},
pages={625--662},
year={2014},
publisher={Wiley Online Library}
}

@article{farhat2015structure,
title={Structure-preserving, stability, and accuracy properties of the energy-conserving sampling and weighting method for the hyper reduction of nonlinear finite element dynamic models},
author={Farhat, Charbel and Chapman, Todd and Avery, Philip},
journal={International journal for numerical methods in engineering},
volume={102},
number={5},
pages={1077--1110},
year={2015},
publisher={Wiley Online Library}
}

@article{farhat2003application,
title={Application of a three-field nonlinear fluid--structure formulation to the prediction of the aeroelastic parameters of an F-16 fighter},
author={Farhat, Charbel and Geuzaine, Philippe and Brown, Gregory},
journal={Computers \& Fluids},
volume={32},
number={1},
pages={3--29},
year={2003},
publisher={Elsevier}
}

@misc{frg_aero_bitbucket,
author={Farhat Research Group},
title={AERO Suite Academic Code Base},
howpublished={\url{https://bitbucket.org/...}},
year={accessed 2026}
}

@inproceedings{vassberg2008development,
title={Development of a common research model for applied CFD validation studies},
author={Vassberg, John and Dehaan, Mark and Rivers, Melissa and Wahls, Richard},
booktitle={26th AIAA applied aerodynamics conference},
pages={6919},
year={2008}
}

@article{cosimo2016general,
title={General treatment of essential boundary conditions in reduced order models for non-linear problems},
author={Cosimo, Alejandro and Cardona, Alberto and Idelsohn, Sergio},
journal={Advanced Modeling and Simulation in Engineering Sciences},
volume={3},
number={1},
pages={7},
year={2016},
publisher={Springer}
}

@article{kalashnikova2014penalty,
title={A penalty/Lagrange multiplier formulation for enforcing boundary conditions in POD Galerkin reduced 
	order models}, 
author={Kalashnikova, Irina and van Bloemen Waanders, Bart and Arunajatesan, Srinivasan and Barone, Matthew F}, 
journal={Computer Methods in Applied Mechanics and Engineering}, 
volume={279},
pages={1--22},
year={2014},
publisher={Elsevier},
doi={10.1016/j.cma.2014.05.006}
}

@article{gunzburger2007reduced,
title={Reduced-order modeling of time-dependent PDEs with multiple parameters in the boundary data},
author={Gunzburger, Max D and Peterson, Janet S and Shadid, John N},
journal={Computer methods in applied mechanics and engineering},
volume={196},
number={4-6},
pages={1030--1047},
year={2007},
publisher={Elsevier}
}



\end{document}